\newtheorem{proposition}{Proposition}
\newtheorem{theorem}[proposition]{Theorem}
\newenvironment{proof}{\noindent \textbf{{Proof~} }}{\hfill $\square$}
\def\BibTeX{{\rm B\kern-.05em{\sc i\kern-.025em b}\kern-.08em
    T\kern-.1667em\lower.7ex\hbox{E}\kern-.125emX}}
\begin{document}

\history{Date of publication xxxx 00, 0000, date of current version xxxx 00, 0000.}
\doi{10.1109/TQE.2020.DOI}

\title{Improved Belief Propagation Decoding Algorithms for Surface Codes}
\author{\uppercase{Jiahan Chen}\authorrefmark{1},
\uppercase{Zhengzhong Yi\authorrefmark{1}, Zhipeng Liang\authorrefmark{1}, Xuan Wang}.\authorrefmark{1}
\IEEEmembership{Senior Member, IEEE}}
\address[1]{Harbin Institute of Technology, Shenzhen, China}

\tfootnote{This work was supported by Guangdong Provincial Key Laboratory of Novel Security Intelligence Technologies (No.2022B1212010005), Shenzhen Science and Technology Program, China (JCYJ20241202123906009), the Colleges and Universities Stable Support Project of Shenzhen, China (No.GXWD20220817164856008) and Harbin Institute of Technology, Shenzhen - SpinQ quantum information Joint Research Center Project (No.HITSZ20230111).}

\markboth
{Author \headeretal: Preparation of Papers for IEEE Transactions on Quantum Engineering}
{Author \headeretal: Preparation of Papers for IEEE Transactions on Quantum Engineering}

\corresp{Corresponding author: Zhengzhong Yi (email: zhengzhongyi@cs.hitsz.edu.cn), Xuan Wang (email: wangxuan@cs.hitsz.edu.cn).}

\begin{abstract}

Quantum error correction is crucial for universal fault-tolerant quantum computing. Highly accurate and low-time-complexity decoding algorithms play an indispensable role in ensuring quantum error correction works effectively. Among existing decoding algorithms, belief propagation (BP) is notable for its nearly linear time complexity and general applicability to stabilizer codes. However, BP's decoding accuracy without post-processing is unsatisfactory in most situations. This article focuses on improving the decoding accuracy of BP over GF(4) for surface codes. Inspired by machine learning optimization techniques, we first propose Momentum-BP and AdaGrad-BP to reduce oscillations in message updating, breaking the trapping sets of surface codes. We further propose EWAInit-BP, which adaptively updates initial probabilities and provides a 1 to 3 orders of magnitude improvement over traditional BP for planar surface code, toric code, and XZZX surface code without any post-processing method, showing high decoding accuracy even under parallel scheduling. The theoretical $O(1)$ time complexity under parallel implementation and high accuracy of EWAInit-BP make it a promising candidate for high-precision real-time decoders. 

\end{abstract}

\begin{keywords}
Quantum Error Correction, Quantum Information, Machine Learning
\end{keywords}

\titlepgskip=-15pt

\maketitle

\section{Introduction}
\label{sec:introduction}
\PARstart{I}{nformation} in quantum computing devices is highly susceptible to noise, leading to information loss and computational errors \cite{wang_2017_coherence}. To address this issue, quantum error correction (QEC) introduces redundant physical qubits and encodes the information in a certain subspace of the state space of all the physical qubits to protect it. So far at least, QEC is an essential step towards universal fault-tolerant quantum computation \cite{zhao_2022_superconducting,google_2023_suppressing,shor_1996_FTQC}. To ensure QEC works, the decoder of a QEC code should output an accurate and fast estimation of error according to the error syndromes \cite{terhal_2015_rmp}. 

Several decoding algorithms have achieved notable success, including minimum-weight perfect matching (MWPM) \cite{edmonds_1965_MWPM} algorithm, union-find (UF) algorithm \cite{Nicolas_2022_UF}, maximum likelihood (ML) algorithm \cite{bravyi_2014_TN}, matrix-product-states-based (MPS-based) algorithm \cite{bravyi_2014_TN}, renormalization group algorithm \cite{duclos_2010_RG}, and neural-network-based (NN-based) algorithm \cite{Torlai_2017_neural1, breuckmann_2018_neural2, Liu_2019_Nerual3, ni_2020_neural4, Meinerz_2022_neural5, Overwater_2022_neural6, Choukroun_2024_neural7, Andreasson_2019_DRL}. Among these, ML, MWPM, and MPS-based algorithms offer high accuracy but come with high complexity. In contrast, the UF algorithm provides near-linear complexity with slightly lower accuracy. NN-based decoders have to balance scalability, decoding accuracy, and time cost \cite{Meinerz_2022_neural5}, and may struggle with generalizability across different codes. Consequently, existing decoding algorithms rarely meet the requirements of both high accuracy and low complexity in quantum error correction decoding \cite{holmes_2020_decodingspeed, battistel_2023_realtimedecoding}.

Belief Propagation (BP) algorithm \cite{Mackay_2004_quantumBP} are applicable to any stabilizer code and can achieve a complexity of $O(j\tau N)$ \cite{kuo_2021_quantumBP9_MBP_gf4}, where $j$ is the average weight of the rows and columns of the check matrix, and $\tau$ is the number of iterations. Additionally, parallel scheduling of BP can be efficiently implemented in hardware, theoretically achieving constant-time complexity \cite{valls_2021_fpga}. However, traditional BP exhibits low decoding accuracy for many quantum error correcting codes and even cannot achieve the code capacity threshold for surface codes, which are the focus of recent advancements in quantum error correction implementation \cite{google_2023_suppressing, google_2024_surface}.

BP can be implemented over both GF(4) \cite{poulin_2008_quantumBP1_gf4, wang_2012_quantumBP2_gf4, babar_2015_quantumBP3_gf4, raveendran_2019_quantumBP4_GBP2_gf4, kuo_2020_quantumBP6_gf4, kuo_2021_quantumBP9_MBP_gf4, Kuo_2022_quantumBP10_gf4} and GF(2) \cite{rigby_2019_quantumBP5_gf2, Lai_2021_quantumBP8_gf2, yi_2023_quantumBP11_gf2, old_2022_quantumBP12_GBP_gf2}, where the Pauli operators are represented. There have been some efforts improving decoding accuracy of BP over GF(4) and GF(2). Over GF(4), MBP \cite{kuo_2021_quantumBP9_MBP_gf4} achieves a threshold of 14.5\% to 16\% for the planar surface code by modifying the step size in the \textit{a posteriori} probability updates. However, MBP is limited to serial scheduling, and the outer loop for searching step sizes increases its time complexity. Over GF(2), GBP \cite{old_2022_quantumBP12_GBP_gf2} achieves a 14\% threshold by combining generalized belief propagation with multiple rounds of decoding. This algorithm can be implemented in parallel, but the number of outer loops increases with the code length, adding to its time complexity. BP-OTS \cite{Chytas_2024_quantumts3_BPOTS} periodically detects and breaks trapping sets using manually set parameters, allowing parallel implementation while maintaining the original complexity. It outperforms MBP for short codes but does not achieve a significant threshold.

Post-processing methods, such as OSD, SI (stabilizer inactivation) \cite{Crest_2022_quantumslowBP5_SI}, MWPM \cite{criger_2018_quantumslowBP1_MWPM, higgott_2022_quantumslowBP4_beliefmatching, caune_2023_quantumslowBP6_MWPM} and UN\cite{google_2023_suppressing} are often used to improve BP's decoding accuracy. However, they also introduce extra time cost and all have high time complexities, ranging from $O(N^2)$ to $O(N^3)$. Improving BP's decoding accuracy to get rid of its reliance on post-processing algorithms remains a significant challenge.

In this article, we mainly focus on improving BP's decoding accuracy for surface codes, while we also believe that the ideas used in these improvements will also work for some other quantum LDPC codes. There are several explanations for why surface codes are challenging for BP decoding. From the perspective of degeneracy \cite{fuentes_2021_degeneracy, kuo_2021_quantumBP9_MBP_gf4}, stabilizers of surface codes have much lower weights compared to the code distance, leading to multiple low-weight errors corresponding to the same syndrome, thus having negative influence on BP's error convergence. From the perspective of short cycles \cite{Mackay_2004_quantumBP}, a number of 4-cycles over GF(4) and 8-cycles over GF(2) in surface codes can cause BP's inaccurate updating. A more intuitive explanation involves trapping sets \cite{raveendran_2021_quantumts1, Chytas_2024_quantumts3_BPOTS}, where surface codes have many symmetric trapping sets causing BP messages to oscillate periodically.

This paper approaches BP improvements from two perspectives: optimizing message updates inspired by gradient optimization, and optimizing \textit{a priori} probabilities using information of previous iterations, leading to the proposal of three algorithms.

First, inspired by the similarity between BP message update and gradient descent of the energy function, we leverage optimization methods in machine learning to optimize the message update rules, and propose Momentum-BP and AdaGrad-BP. In addition to improving decoding accuracy, experiments on trapping sets in surface codes show that Momentum-BP and AdaGrad-BP can smooth the message update process and reduce oscillation amplitude.

Furthermore, realizing the importance of \textit{a priori} probabilities involved in the iteration of BP, we propose EWAInit-BP which dynamically updates the \textit{a priori} probabilities involved in each iteration by utilizing the \textit{a posteriori} probabilities output by previous iterations. 
Experiments on short toric codes, planar surface codes, and XZZX surface codes (under biased noise) show that EWAInit-BP outperforms existing BP improvements without post-processing and outer loops under parallel scheduling. Notably, the time complexity of all proposed improvements remains $O(j\tau N)$. Thus, EWAInit-BP can be implemented with theoretically constant complexity in parallel while maintaining decoding accuracy.

The structure of this article is as follows. Section \ref{preliminaries} introduces the fundamentals of surface codes and BP decoding, as well as the perspective of BP as a gradient optimization, and summarizes the challenges BP faces in decoding surface codes. Section \ref{belief propagation optimization} introduces Momentum-BP, AdaGrad-BP and EWAInit-BP. The results of the simulation are presented in Section \ref{simulation results}. Section \ref{summary} summarizes our work and discusses the future challenges.

\section{Preliminaries}
\label{preliminaries}

\subsection{Surface codes}

A stabilizer code is defined by an abelian subgroup $ \mathcal{S} $ of the Pauli group $ \mathcal{P}_n $ excluding $-I$, and composed of operators $ S \in \mathcal{P}_n $ such that $ S|\psi\rangle = |\psi\rangle $ for all codewords $ |\psi\rangle $ in the code space. The code space is the simultaneous $ +1 $ eigenspace of all elements of $ \mathcal{S} $. The stabilizers must commute with each other, which means $ [S_i, S_j] = 0 $ for  $ \forall S_i, S_j \in \mathcal{S} $. The dimension of the code space is determined by the number of independent generators of $ \mathcal{S} $. For a stabilizer code with $ N $ physical qubits and $ K $ logical qubits, $ \mathcal{S} $ has $ N-K $ independent generators.

The stabilizers are used to detect errors by measuring their eigenvalues. The set of measurement outcomes, known as the \textit{error syndrome}, contains information about the errors on physical qubits. The logical operators of a stabilizer code are Pauli operators that commute with all elements of $ \mathcal{S} $ but are not in $ \mathcal{S} $, thus they are in $\mathcal{N(S)\backslash S}$ where $\mathcal{N(S)} = \{E: ES=SE, \ \forall S \in \mathcal{S}\}$. They act on the encoded logical qubits without disturbing the code space, thus they can lead to \textit{undetectable errors}. The code distance $ D $ is defined as the minimum weight of logical operators. A quantum stabilizer code can be represented by $[[N, K, D]]$.

Surface codes may be the most notable stabilizer codes in recent years. Surface codes are a class of quantum Low-Density Parity-Check (LDPC) codes \cite{breuckmann_2021_QLDPC} where qubits are placed on a two-dimensional lattice, relying on the topology of the surface to protect quantum information \cite{bombin_2013_introductiontopological}. Logical qubits are encoded in global degrees of freedom, and larger lattices provide larger code distances. The code distance scales with the code length as $O(\sqrt{N})$, whereas the code rate is asymptotically zero. In this article, toric codes refer to those with closed boundaries, and planar surface codes refer to those with open boundaries. Besides, We consider the unrotated surface codes, but it is clear that our proposed algorithms can be extended to rotated versions.

A $[[2L^2, 2, L]]$ toric code has a 2-dimensional torus topology, and the closed boundaries makes it translationally invariant to the error syndrome, which is beneficial to decoding algorithms such as BP and neural networks. The Euler characteristic of the torus is 0, but the X-type and Z-type stabilizers of the toric code both have a redundancy, resulting in a total of two encoded logical qubits.

A $[[2L^2-2L+1, 1, L]]$ planar surface code is defined on a surface with open boundaries, which makes the stabilizers near the boundaries having lower weights. This reduction in stabilizer weight results in a slight decrease in the error correction capability compared to the toric code; however, it simplifies implementation in physical quantum systems.. The lattice structure of the planar surface codes are illustrated in Fig.~\ref{fig:surface_topo}.

A $[[L^2, 2, L]]$ XZZX surface code is a non-CSS code designed to handle biased noise more effectively. The stabilizers of these codes have the form of $XZZX$. This alternating pattern of Pauli $X$ and $Z$ operators allows the code to reach the theoretical maximum threshold of 50\% under pure Pauli $X$, pure Pauli $Y$ and pure Pauli $Z$ noise.

\Figure[t!](topskip=-10pt, botskip=0pt, midskip=0pt)[width=2.5in]{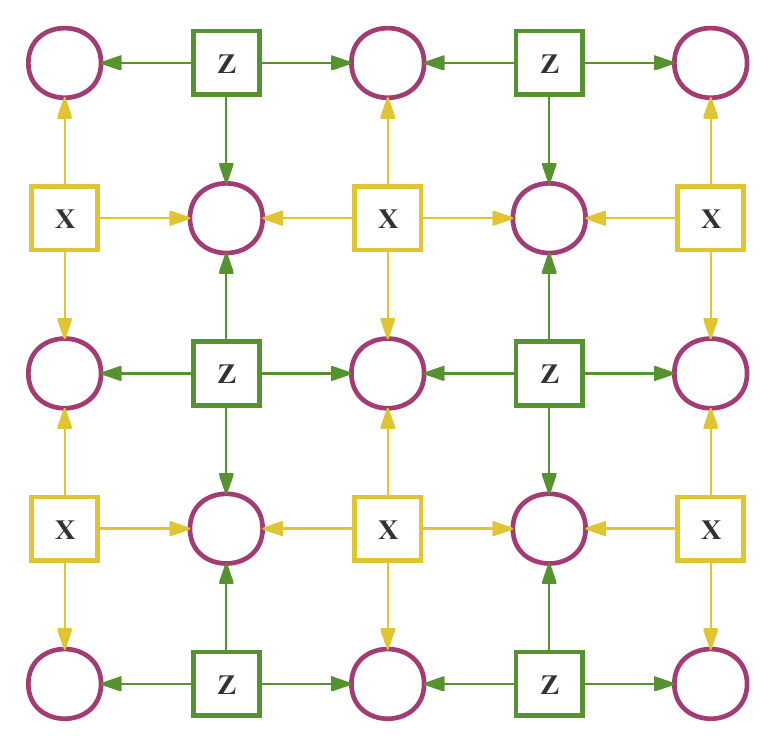}
{The lattice of the $[[2L^2-2L+1,1,L]]$ planar surface code with L=3. Circles represent data qubits, yellow squares represent X-type stabilizers, green squares represent Z-type stabilizers, and arrows indicate the coupling between stabilizers and qubits.\label{fig:surface_topo}}

\subsection{Belief propagation decoding}

\subsubsection{BP over GF(4)}

The idea behind \textit{maximum likelihood decoding} in quantum error correction is to find the most likely error coset given the error syndrome. This can be expressed as \cite{poulin_2008_quantumBP1_gf4}
\begin{align}
    E = \arg\max_{E\mathcal{S}} \sum_{S\in \mathcal{S}} p(ES|s) \label{qecobject}
\end{align}
where the term $ ES $ denotes the coset formed by the error $ E $ and the stabilizer $ S $. However, finding this maximum likelihood solution is computationally expensive and exponential \cite{hsieh_2011_hardness}. Belief propagation decoding algorithm simplifies the above objective to
\begin{align}
    E = \left\{\arg\max_{E_j} p(E_j|s)\right\}_{j=1}^{n}  \label{bpobject}
\end{align}
where the probability of each individual error $ E_j $ given the syndrome $ s $ is maximized independently for $ j = 1 $ to $ n $. Belief propagation makes two simplifications to the objective, which compromise its performance on highly degenerate quantum codes.
\begin{itemize}
    \item Assuming the most likely error pattern is the same as the most likely error coset;
    \item Assuming the most likely error on each qubit is the same as the most likely error pattern.
\end{itemize}

BP algorithm passes messages on a \textit{Tanner graph}, where stabilizers and qubits are represented as \textit{check nodes} and \textit{variable nodes}, respectively. Over GF(4) which this article mainly focuses on, each variable node corresponds to a probability vector of all Pauli errors on one qubit. 

Besides, one can use log-likelihood ratio (LLR) as the message passed between variable nodes and check nodes. In this way, belief propagation (abbreviated as LLR-BP\textsubscript{4}) receives \textit{a priori} probability vectors describing each type of Pauli error on each qubit as $\{p_{v_j}=\{p_{v_j}^X, p_{v_j}^Y, p_{v_j}^Z\}\}_{j=1}^N$, and initializes the variable-to-check messages as \cite{kuo_2021_quantumBP9_MBP_gf4}
\begin{align}
    \Pi_{v_j} &= \{{\Pi_{v_j, W}}\}_{W \in \{X,Y,Z\}} = \ln \frac{1-p_{v_j}}{p_{v_j}}, \\
    m_{v_j\to c_i}^{(0)} &= \lambda_{H_{ij}} ({\Pi_{v_j, X}}, {\Pi_{v_j, Y}}, {\Pi_{v_j, Z}}), \label{init}
\end{align}
where $\Pi_{v_j}$ is the \textit{a priori} probabilities of variable node $j$ expressed by LLR. The lambda function in \eqref{init} is defined as
\begin{align}
    \lambda_{H_{ij}} (p^X, p^Y, p^Z) = \ln \frac{1+e^{-p^{H_{ij}}}}{e^{-p^X}+e^{-p^Y}+e^{-p^Z}-e^{-p^{H_{ij}}}}
\end{align}
indicating LLR values of whether the Pauli error $ E_j $ commutes with the Pauli operator $H_{ij}$ of the parity-check matrix $H$. $H_{ij}$ is also the $j$-th element of stabilizer $ S_i $.

One iteration of BP consists of one horizontal update, one vertical update and one hard decision. In the horizontal update, each check node calculates the feedback based on the messages received from the variable nodes but not from the target variable node in last vertical update, sending a check-to-variable message as follows
\begin{align}
    w = 2\tanh^{-1} \left( \prod_{v_{j'} \in \mathcal{N}(c_i) \backslash v_j} \tanh \left(\frac{m_{v_{j'} \to c_i}^{(t-1)}}{2}\right) \right) \label{h_update},
\end{align}
\begin{align}
    m_{c_i \to v_j}^{(t)} = (-1)^{s_i} * w,
\end{align}
where $\mathcal{N}(c_i)$ represents all variable nodes adjacent to check node $i$, and $s_i$ is the syndrome corresponding to stabilizer $S_i$.

In the vertical update, variable nodes aggregate the messages from check nodes and, after a similar extrinsic calculation as above, send variable-to-check messages as
\begin{align}
    m_{v_j \to c_i}^{(t)} = \lambda_{H_{ij}} (\{\Pi_{v_j, W} + \sum_{\substack{c_{i'} \in \mathcal{M}(v_j) \backslash c_i \\ \left\langle W, H_{ij}\right\rangle=1}} m_{c_{i'} \rightarrow v_j}^{(t)}\}_{W \in \{X,Y,Z\}}) \label{v_update},
\end{align}
where $\mathcal{M}(v_j)$ represents all check nodes adjacent to variable node $j$. $\left\langle W, H_{ij} \right\rangle$ stands for the trace inner product, here the result is 1 when the error operator $W$ anti-commutes with the matrix element $H_{ij}$ and 0 when they commute.

In the hard decision, variable nodes first accumulate the messages from all check nodes to calculate the \textit{a posteriori} probability of each Pauli error on each qubit as
\begin{align}
    Q_{v_j, W}^{(t)} = \Pi_{v_j, W} + \sum_{\substack{c_i \in \mathcal{M}(v_j) \\ \left\langle W, H_{ij}\right\rangle=1}} m_{c_i \rightarrow v_j}^{(t)}. \label{p_update}
\end{align}
Then, for each variable node, if all log-likelihood ratios for Pauli errors are greater than 0, then the hard decision result is the identity $I$; otherwise, it is the Pauli error with the smallest LLR value. The hard decisions of all qubits combine to form the error estimate $\hat{E}$. If $\hat{E}$ corrects the error syndrome, the algorithm converges; otherwise, it returns to the horizontal update for a new iteration. If the maximum number of iterations is reached, decoding has failed.

BP can be performed under either parallel or serial scheduling \cite{hocevar_2004_LayerScheduling}. Under parallel scheduling, all variable nodes and check nodes update their messages simultaneously in each iteration, allowing for hardware parallelization and constant time complexity. On the other hand, the serial scheduling updates the messages of variable nodes and check nodes sequentially. This method leads to faster convergence and slightly higher accuracy, though it has a more complex hardware implementation and higher computational overhead.

One perspective on BP is that each \textit{a posteriori} probabilities update in one BP iteration can be viewed as a single step of gradient descent with a simplified step size on a certain energy function, as the expanded form of the \textit{a posteriori} update closely resembles the derivative of the energy function (see Appendix \ref{app:gradient}). From this perspective, MBP and AMBP \cite{kuo_2021_quantumBP9_MBP_gf4} adjusts the "learning rate" of message updates to converge more quickly to the equivalent energy minimum, which is common in degenerate quantum codes.

\subsubsection{Hardness of BP in decoding surface codes}

There are several terms that can explain the unsatisfactory accuracy of BP in surface codes.

\paragraph{Cycles} 
Short cycles cause the joint probability calculations in BP to be dependent, which violates the independence assumption of updating variable nodes \eqref{v_update} and check nodes \eqref{h_update} \cite{Mackay_2004_quantumBP}. Surface codes exhibit 8-cycles over GF(2) and 4-cycles over GF(4), exacerbating this issue.

\paragraph{Degeneracy} 
High degeneracy leads to BP decoding objectives \eqref{bpobject} deviating from true targets \eqref{qecobject} \cite{poulin_2008_quantumBP1_gf4, fuentes_2021_degeneracy}, especially when errors can combine with stabilizers to form error types that are symmetric to the original ones. In such cases, BP's marginal probability calculations for the both error types will be completely consistent. In surface codes, the weights of the stabilizers are all no greater than 4, often much less than the weights of the logical operators. This symmetric degeneracy results in incorrect convergence of BP decoding.

\paragraph{Trapping Sets}
Trapping sets are specific local structures in error-correcting codes combined with corresponding error patterns, where iterative decoding algorithms can never converge \cite{dehkordi_2010_classicts1,han_2022_classicts2,kang_2016_classicts3}. In QEC, trapping sets are caused by the aforementioned short loops and symmetric degeneracy \cite{raveendran_2021_quantumts1, Pradhan_2023_quantumts2}. The smallest trapping set in surface codes is shown in Fig.~\ref{fig:ts}, consisting of an 8-cycle and four symmetric stabilizers, where the local connections of each check node and variable node are identical \cite{Chytas_2024_quantumts3_BPOTS}. If any two qubits here experience errors that anticommute with the stabilizers, the hard decision result of BP will oscillate periodically between all 0s and all 1s.

The first two terms are directly related to the underlying principles of the BP algorithm. In contrast, trapping sets offer a more specific description, which can not only inspire potential improvements but also serve as a metric for evaluating improvements.

\begin{figure}[t]
  \centering
  \subfloat[]{\includegraphics[width=0.465\columnwidth]{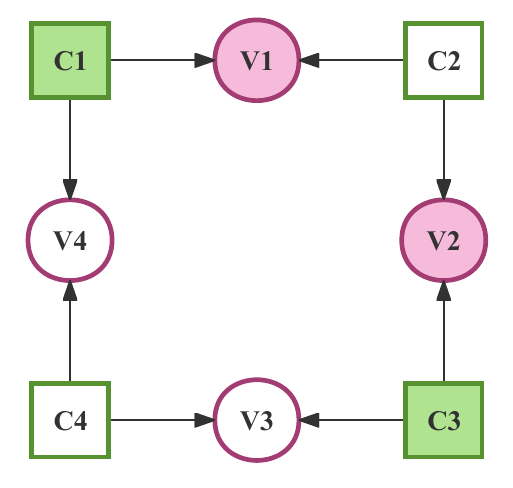}\label{fig:ts_1}}
  \hfill
  \subfloat[]{\includegraphics[width=0.45\columnwidth]{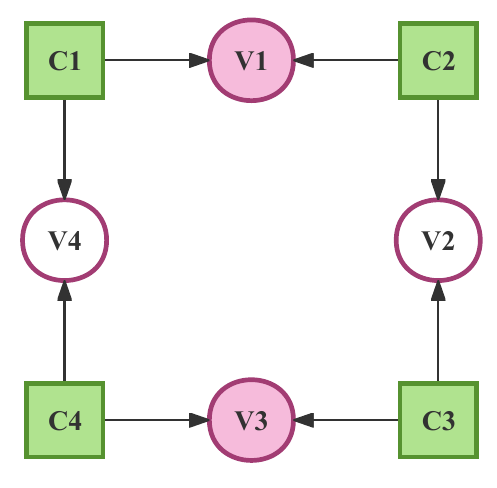}\label{fig:ts_2}}

  \caption{Trapping sets with two types of error patterns in surface codes. Solid circles represent error qubits, and solid squares represent stabilizers with non-trivial error syndromes. The stabilizers in the figure can be either all X-type or all Z-type.}
  \label{fig:ts}
\end{figure}

\section{Belief Propagation Optimization}
\label{belief propagation optimization}

From the perspective of BP as gradient descent \cite{kuo_2021_quantumBP9_MBP_gf4}, the high degeneracy of surface codes results in many equivalent global minima, leading to BP oscillating between adjacent erroneous solutions within symmetric trapping sets. A natural idea is to explore if certain empirical methods can be employed to smooth BP’s message updates, stabilizing the \textit{a posteriori} probabilities at a specific local minimum before oscillations occur, thus enabling successful convergence to degenerate errors.

\subsection{Gradient Optimization of Message updates}
\label{Message update optimization}

This subsection follows the gradient perspective, inspired by optimizers in machine learning, to smooth BP iterations by modifying its message update rules.

\subsubsection{Update gradient smoothing} 

The Exponential Weighted Average (EWA) method updates and smooths data series by combining the current observation with a weighted average of the previous value, expressed as
\begin{align}
  \label{EWA}
    v_t = \gamma v_{t-1} + (1 - \gamma) x_t,
\end{align}
where $ x_t $ is the observation at time $ t $, $ v_t $ is the smoothed value, and $ \gamma $ is a decay factor satisfying $0 < \gamma < 1$.

In machine learning, momentum methods utilize an idea similar to EWA, adding a momentum term to accumulate historical information in gradient descent as
\begin{align}
  \label{Momentum}
    m_t &= \gamma m_{t-1} + (1-\gamma) \nabla J(\theta_{t-1}), \\
    \theta_t &= \theta_{t-1} - \alpha m_t,
\end{align}
where $ \gamma $ is the decay factor similar to that in EWA, $ m_t $ is the accumulated momentum term, $ \nabla J(\theta) $ is the gradient of the loss function with respect to parameter $ \theta $, and $ \alpha $ is the learning rate. This method integrates previous directions while updating parameters, thereby smoothing the update direction.

We incorporate the concept of momentum into BP's \textit{a posteriori} probability updates to solve this issue by breaking the symmetry in BP's message updates, proposing Momentum-BP. Momentum-BP can be implemented under both parallel and serial scheduling. Here, we present the latter in Algorithm~\ref{algo:momentumBP}, where the initialization and check node update remain the same as in traditional BP. Our approach computes the \textit{a posteriori} probability before performs the vertical update. The posterior update rule of Momentum-BP is shown as \eqref{mombp1} and \eqref{mombp2} where $ g_{v_j,W}^{(t)} $ are the accumulated momentum term and $ Q_{v_j,W}^{(t)} $ are the \textit{a posteriori} probability for the $ j $-th variable node regarding the Pauli error $ W $ in the $ t $-th iteration. Notice that we transformed \eqref{p_update}, so that the accumulated term is the posterior value from the previous iteration. In the variable node update step, we subtract the original message sent from the check node rather than the momentum-smoothed message to ensure stability and prevent overcompensation.

For implementation of the gradient calculation, we ignore the coefficient terms in \eqref{partial}, using the difference between the posterior updates calculated by traditional BP and the \textit{a posteriori} probabilities from the previous iteration as the gradient for the current iteration, expanded in \eqref{mombp3}. When $ \alpha=1 $ and $ \gamma=0 $, \eqref{mombp3} is equivalent to the traditional BP's \textit{a posteriori} update; when $ 0 < \alpha < 1 $ and $ \gamma=0 $, it is equivalent to the EWA of \textit{a posteriori} updates. In practice, to avoid the malice of hyper-parameter tuning, we either fix $ \alpha $ at 1 or $ \gamma $ at 0.

The properties of Momentum-BP can be explained from two perspectives. (1) Since it takes into account message updates from previous iterations, the method smooths the update direction in each iteration, preventing the corresponding energy function from oscillating between two points and gradually converging to the global minimum. (2) Node-wise momentum updates gradually break the symmetry of message passing shown in Fig.~\ref{fig:ts}, allowing BP to escape from these trapping sets. The trapping set simulation in Section \ref{Breaking the Trapping Sets} will support those explanations.

\begin{algorithm}[H]
    \caption{Momentum-BP under Serial Scheduling}
    \label{algo:momentumBP}
    
    \textbf{Input:} Parity-check matrix $\boldsymbol H$ of $M \times N$, syndrome vector $\boldsymbol s = (1, -1)^M$, \textit{a priori} LLR vectors $\{\Pi_{v_j}^{(0)}\}_{j=1}^N$, max iterations $iter_{max}$, learning rate $\alpha=1$, and momentum rate $\gamma \geq 0$.

    \textbf{Output:} Estimated error vector $\boldsymbol {\hat E}$.

    \textbf{Initialization:} \\
        $t=1$ \\
        \textbf{for} $j=1\to N$, \textbf{do}
        
        \qquad Initialize $m_{v_j\to c_i}^{(0)}$ as \eqref{init}, let $g_{v_j}^{(0)} = 0, \ Q_{v_j}^{(0)} = \Pi_{v_j}$
    
    \textbf{Serial update:} \\
        \textbf{for} $j=1\to N$, $W \in \{X,Y,Z\}$, \textbf{do}
        
            \qquad \textbf{for} $i \in \mathcal{N}(v_j)$, \textbf{do}
            
                \qquad \qquad Calculate $m_{c_i \rightarrow v_j}^{(t)}$ as \eqref{h_update}

        \qquad \textbf{Momentum update}
        \begin{align}
        	g_{v_j, W}^{(t)} &= \gamma g_{v_j, W}^{(t-1)} + (1-\gamma)\frac{\partial J}{\partial Q_{v_j, W}^{(t-1)}} \label{mombp1} \\
        	Q_{v_j, W}^{(t)} &= Q_{v_j, W}^{(t-1)} - \alpha g_{v_j, W}^{(t)} \label{mombp2}
        \end{align}
        \qquad Variable node update
        \begin{align} \label{mombp5}
            m_{v_j \to c_i}^{(t)} = Q_{v_j, W}^{(t)} - \langle W, H_{ij} \rangle m_{c_i \to v_j}^{(t)}
        \end{align}
        
    \textbf{Hard decision:}
    
        \textbf{for} $j=1\to N$, $W \in \{X,Y,Z\}$, \textbf{do}
        
            \qquad \textbf{if} $Q_{v_j, W}^{(t)} > 0$ for all $W$, \textbf{let} $\hat E_i=I$
        
            \qquad \textbf{else, let} $\hat E_i=\underset{W \in\{X, Y, Z\}}{\arg \min } Q_{v_j,W}^{(t)}$

        \textbf{if} $\langle H, \hat E \rangle = s$, \textbf{return} "converge" and halt. \\
        \textbf{else if} $t = iter_{max}$, \textbf{return} "fail" and halt. \\
        \textbf{else} $t=t+1$, repeat from \textbf{serial update}.
\end{algorithm}

\begin{figure*}[t]
\centering
\begin{align}
    Q_{v_j,W}^{(t)} &= Q_{v_j,W}^{(t-1)} - \alpha \left( \gamma g_{v_j,W}^{(t-1)} + (1-\gamma) (Q_{v_j,W}^{(t-1)} - \Pi_{v_j,W} - \sum_{\substack{c_i \in \mathcal{M}(v_j) \\ \langle W, H_{ij}\rangle=1}} m_{c_i \rightarrow v_j}^{(t)}) \right) \label{mombp3} \\
    & \xlongequal{\text{if } \alpha=1} \gamma (Q_{v_j,W}^{(t-1)} - g_{v_j,W}^{(t-1)}) + (1-\gamma)(\Pi_{v_j,W} + \sum_{\substack{c_i \in \mathcal{M}(v_j) \\ \langle W, H_{ij}\rangle=1}} m_{c_i \rightarrow v_j}^{(t)}). \\
    & \xlongequal{\text{if } \gamma=0} \alpha (\Pi_{v_j,W} + \sum_{\substack{c_i \in \mathcal{M}(v_j) \\ \langle W, H_{ij}\rangle=1}} m_{c_i \rightarrow v_j}^{(t)}) + (1-\alpha) Q_{v_j,W}^{(t-1)}. 
    \label{mombp4}
\end{align}
\end{figure*}

\subsubsection{Adaptive step size}

Oscillation of message updates in trapping sets can also be attributed to the step size. In traditional LLR-BP, all variable node posterior updates result from simply directly accumulating messages from neighboring check nodes. From the gradient perspective, if the step size is too large, the optimization algorithm is likely to oscillate; if the step size is too small, the algorithm is likely unable to update effectively. In practical decoding, the update frequency of different variable nodes may be uneven, with some requiring more rapid updates while others need to remain stable.

Inspired by adaptive gradient methods in deep learning, we propose AdaGrad-BP. The framework of this algorithm is similar to Momentum-BP, with initialization and horizontal updates identical to traditional BP, while the posterior probability updates follow an adaptive step size rule as
\begin{align}
    Q_{v_j,W}^{(t)} = Q_{v_j,W}^{(t-1)} - \alpha \frac{{g^{'}}_{v_j,W}^{(t)}}{\sqrt{G_{v_j,W}^{(t)}}+\epsilon}, \label{boost}
\end{align}
where $ {g^{'}}_{v_j,W}^{(t)} $ approximates the gradient of energy function
\begin{align}
    {g^{'}}_{v_j,W}^{(t)} =  Q_{v_j,W}^{(t-1)} - \Pi_{v_j,W} - \sum_{\substack{c_i \in \mathcal{M}(v_j) \\\left\langle W, H_{ij}\right\rangle=1}} m_{c_i \rightarrow v_j}^{(t)},
\end{align}
and $ G_{v_j,W}^{(t)} $ is the cumulative sum of the squared gradients as
\begin{align}
    G_{v_j,W}^{(t)} = G_{v_j,W}^{(t-1)} + ({g^{'}}_{v_j,W}^{(t)})^2.
\end{align}

This method adaptively adjusts the step size by accumulating historical gradient information. If the estimate of a variable node for a certain Pauli error changes significantly, the algorithm reduces the step size to counteract oscillation. \eqref{boost} takes effect after the first iteration to ensure initial update stability. Here, $\alpha$ is the boost learning rate, providing an update impetus as the algorithm begins to accumulate gradients. $\alpha$ is relatively insensitive, so it is fixed at 5 in our simulations. While AdaGrad-BP may face issues with accumulated gradients causing the step size to approach zero after many iterations, in practice it can reduce the number of iterations, allowing most error patterns to converge before the gradients become too small.

Unfortunately, combining Momentum-BP and AdaGrad-BP using Adam optimizer or other methods will not yield satisfactory results. On one hand, the Adam algorithm introduces additional hyperparameters whose optimal values for BP cannot be directly borrowed from deep learning, thereby complicating the algorithm. On the other hand, Momentum and AdaGrad approach similar message update optimizations from different perspectives, and a straightforward combination of the two can lead to overcompensation.

\subsection{Dynamic Initialization of Prior Probabilities}
\label{Dynamic Initialization}

BP decoding and gradient descent still have a fundamental difference, which limits the performance of the algorithms in Section \ref{Message update optimization}. Each update in gradient descent is based on the results of the previous step, whereas in BP decoding, each calculation of \textit{a posteriori} probabilities accumulates the same \textit{a priori} probabilities. Given the objective function as shown in \eqref{bpobject}, for simplicity using the binary domain for instance, BP calculates the LLR values as
\begin{align}
    Q_{v_j} &= \ln \frac{P(e_j=0|s)}{P(e_j=1|s)} \\
    &= \ln \frac{P(e_j=0)\cdot P(s|e_j=0)}{P(e_j=1)\cdot P(s|e_j=1)} \\
    &= \Pi_{v_j} + \sum_{\substack{c_{i'} \in \mathcal{M}(v_j) \backslash c_i}} m_{c_{i'} \rightarrow v_j}, 
\end{align}
where $ \Pi_{v_j} $ is a manually set initial value of the algorithm that participates in each iteration's \textit{a posteriori} probability update. However, since the initial error probabilities are derived from a theoretical channel which is difficult to measure and estimate in experiments, the \textit{a priori} probabilities can significantly affect the BP iterations.

Existing methods for altering initial probabilities include reinitialization using information obtained from previous iterations \cite{poulin_2008_quantumBP1_gf4, wang_2012_quantumBP2_gf4, old_2022_quantumBP12_GBP_gf2}, as well as detecting oscillations during the decoding process and modifying initial values accordingly \cite{Chytas_2024_quantumts3_BPOTS}. However, they either require multiple outer loops or manually set algorithm parameters. We introduce a simple but effective method for dynamically updating \textit{a priori} probabilities, employing a transformed exponential weighted average, incorporating components of previous \textit{a posteriori} probabilities into the \textit{a priori} probabilities with a decaying factor. Hence, we call this method EWAInit-BP. EWAInit-BP under parallel scheduling is shown in Algorithm~\ref{algo:EWAInitBP}.

One might wonder why the first term on the RHS of \eqref{EWAInit} is each qubit's list of initial \textit{a priori} probabilities $ \Pi_{v_j}^{(0)} $ instead of \textit{a priori} probabilities from the previous iteration $ \Pi_{v_j}^{(t-1)} $, as the latter aligns better with the definition and intuition of the exponential weighted average. In fact, by using the definition in \eqref{EWAInit} rather than the aforementioned approach, we achieve the exponentially weighted average of horizontal message updates for the \textit{a posteriori} probabilities in EWAInit-BP, as stated in Theorem \ref{theorem1}.

\begin{algorithm}[H]
    \caption{EWAInit-BP under Parallel Scheduling}
    \label{algo:EWAInitBP}
    
    \textbf{Input:} Parity-check matrix $\boldsymbol H$, syndrome vector $\boldsymbol s$, \textit{a priori} LLR vectors $\{\Pi_{v_j}^{(0)}\}_{j=1}^N$, max iterations $iter_{max}$, discount factor $\alpha$.

    \textbf{Output:} Estimated error vector $\boldsymbol {\hat E}$.

    $t=1, \ \Pi^{(1)}=\Pi^{(0)}, \ m_{v_j\to c_i}^{(0)} = \lambda_{H_{ij}} (\Pi_{v_j, X}^{(0)}, \Pi_{v_j, Y}^{(0)}, \Pi_{v_j, Z}^{(0)})$ \\
    \textbf{EWA Initialization when $t > 1$:} \\
        \textbf{for} $j=1\to N$, \textbf{do}
        \begin{align} \label{EWAInit}
            \Pi_{v_j}^{(t)} = \alpha \Pi_{v_j}^{(0)} + (1-\alpha) Q_{v_j}^{(t-1)}
        \end{align}

    \textbf{Horizontal update:} \\
        \textbf{for} $i=1\to M$, Calculate $m_{c_i \rightarrow v_j}^{(t)}$ as \eqref{h_update}

    \textbf{Vertical update:} \\
        \textbf{for} $j=1\to N$,
        
        \qquad Calculate $m_{v_j \rightarrow c_i}^{(t)}$ as \eqref{v_update}, replacing $\Pi_{v_j, W}$ with $\Pi_{v_j, W}^{(t)}$
        
    \textbf{Hard decision:} \\
        \textbf{for} $j=1\to N$, $W \in \{X,Y,Z\}$, \textbf{do}
        
        \qquad Calculate $Q_{v_j, W}^{(t)}$ as \eqref{p_update}, replacing $\Pi_{v_j, W}$ with $\Pi_{v_j, W}^{(t)}$
    
        \qquad \textbf{if} $Q_{v_j,W}^{(t)} > 0$ for all $e$, \textbf{let} $\hat E_i=I$
        
        \qquad \textbf{else, let} $\hat E_i=\underset{W \in\{X, Y, Z\}}{\arg \min } Q_{v_j,W}^{(t)}$

        \textbf{if} $\langle H, \hat E \rangle = s$, \textbf{return} "converge" and halt. \\
        \textbf{else if} $t = iter_{max}$, \textbf{return} "fail" and halt. \\
        \textbf{else} $t=t+1$, repeat from \textbf{EWA Initialization}.
\end{algorithm}

\begin{theorem}
\label{theorem1}
The a posteriori update of EWAInit-BP (Algorithm \ref{algo:EWAInitBP}) realizes an exponentially weighted average of past horizontal message updates, specifically expressed as
\begin{align}
    Q_{v_j, W}^{(t)} &= \Pi_{v_j, W}^{(0)} + \sum_{k=0}^{t} (1 - \alpha)^k \sum_{\substack{c_i \in \mathcal{M}(v_j) \\ \langle W, H_{ij} \rangle = 1}} m_{c_i \rightarrow v_j}^{(t-k)}.
\end{align}
\end{theorem}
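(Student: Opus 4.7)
The plan is a direct induction on $t$, using only the two key recurrences embedded in Algorithm~\ref{algo:EWAInitBP}: the EWA initialization rule $\Pi_{v_j}^{(t)} = \alpha \Pi_{v_j}^{(0)} + (1-\alpha) Q_{v_j}^{(t-1)}$ valid for $t \geq 2$, and the posterior computation $Q_{v_j,W}^{(t)} = \Pi_{v_j,W}^{(t)} + S_{v_j,W}^{(t)}$, where I abbreviate $S_{v_j,W}^{(t)} := \sum_{c_i \in \mathcal{M}(v_j),\ \langle W, H_{ij}\rangle = 1} m_{c_i \to v_j}^{(t)}$. Combining these two one-line identities gives the master recursion
\begin{align}
Q_{v_j,W}^{(t)} = \alpha \Pi_{v_j,W}^{(0)} + (1-\alpha)\, Q_{v_j,W}^{(t-1)} + S_{v_j,W}^{(t)},
\end{align}
and the whole theorem is essentially a telescoping of this recursion against a geometric series.

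For the base case $t=1$, the algorithm explicitly sets $\Pi_{v_j}^{(1)} = \Pi_{v_j}^{(0)}$ (bypassing the EWA step), so $Q_{v_j,W}^{(1)} = \Pi_{v_j,W}^{(0)} + S_{v_j,W}^{(1)}$, which agrees with the claimed formula once we adopt the natural convention $m_{c_i\to v_j}^{(0)} = 0$ (equivalently, $S_{v_j,W}^{(0)} = 0$), making the $k=t$ term of the claimed sum vanish. For the inductive step I would assume the formula for $t-1$, substitute it into the master recursion, and collect terms: the $\Pi_{v_j,W}^{(0)}$ contribution becomes $\alpha + (1-\alpha)\cdot 1 = 1$ (the inductive hypothesis already carries coefficient $1$ on $\Pi_{v_j,W}^{(0)}$), while each historical sum $S_{v_j,W}^{(t-1-k)}$ picks up an extra factor of $(1-\alpha)$, and the fresh term $S_{v_j,W}^{(t)}$ enters with coefficient $(1-\alpha)^0 = 1$. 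After reindexing $k \mapsto k+1$ on the shifted sum, the expression matches the claim.

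I do not anticipate any deep obstacle; the only thing to watch is the bookkeeping at the endpoints. In particular, I must argue cleanly why the $t=1$ case does not produce an unwanted $\alpha \Pi^{(0)}$ offset (it doesn't, because the EWA rule is suppressed at $t=1$), and why the sum in the theorem statement can be written with upper limit $t$ rather than $t-1$ (this rests on the convention $S_{v_j,W}^{(0)} = 0$, which is consistent with the initialization that does not define check-to-variable messages at $t=0$). With those two conventions pinned down explicitly before starting the induction, the algebra reduces to the geometric-series collapse described above.
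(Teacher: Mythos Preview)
Your proposal is correct and follows essentially the same approach as the paper: both combine the EWA rule \eqref{EWAInit} with the posterior rule \eqref{p_update} into a single recursion and then unfold it (the paper calls this ``recursively expanding the update formula'' in Appendix~\ref{app:expansion}, which is exactly your induction written out iteratively). Your explicit handling of the convention $S_{v_j,W}^{(0)}=0$ to justify the upper limit $t$ is in fact more careful than the paper, which leaves that endpoint implicit.
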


\begin{proof}
This result can be derived by combining \eqref{EWAInit} and \eqref{p_update} and recursively expanding the update formula. The detailed process is provided in Appendix \ref{app:expansion}.
\end{proof}

Notably, EWAInit-BP and Momentum-BP share a similar recusive form, but the former is more concise (see Appendix \ref{app:expansion}). The primary distinction between them lies in the differing placements of EWA (or momentum). Moreover, by incorporating the EWA step into the initialization phase, EWAInit-BP avoids the issue of message coefficient inconsistency caused by performing posterior updates followed by vertical updates. In fact, EWAInit-BP also demonstrates better performance in simulations.

\subsection{Complexity}

Our proposed algorithms involve the same message passing between variable nodes and check nodes as in traditional BP. The main difference lies in the additional update steps, which in Momentum-BP and AdaGrad-BP consist of recalculating the \textit{a posteriori} probabilities, and in EWAInit-BP involves recalculating the \textit{a priori} probabilities. 
These additional steps only involve simple arithmetic operations and can be executed locally at each node, without requiring extra iterations, loops, or global communication, thus not introducing any significant overhead. Therefore, the overall time complexity of the proposed algorithms remains $O(j\tau N)$, where $j$ is the average degree of the Tanner graph, and $\tau$ is the number of iterations.

In terms of space complexity, traditional BP requires storing messages and probability values, resulting in $O(N)$ space requirements. Our algorithms only add extra storage space to accumulate gradients or maintain exponential weighted averages which is proportional to the number of variable nodes and does not change the overall space complexity. Thus, our improved algorithms maintain the same efficiency in both time and space as traditional BP, ensuring scalability and feasibility for practical implementation.

\section{Simulation Results}
\label{simulation results}

In all simulations, the criterion for successful decoding is that the error estimate satisfies $\hat{E} \in E\mathcal{S}$, rather than $\hat{E} = E$ as in classical information. All cases not meeting the above condition are referred to as \textit{logical errors} in this article. Herein, a \textit{detected error} refers to $\hat{E}E$ being anti-commutative with any one stabilizer $S \in \mathcal{S}$; an \textit{undetected error} implies $\hat{E}E$ commutes with all stabilizers, i.e., $\hat{E}E \in N(\mathcal{S}) \backslash \mathcal{S}$, where the result of error correction is a logical operator. Each data point on the experimental curves represents the mean of at least $10^4$ simulations.

Without further specification, our simulations employ a depolarizing channel, where each qubit has an equal probability $p/3$ of experiencing Pauli $X$, $Y$, or $Z$ errors, and a probability of $1-p$ of no error occurring. Additionally, for ease of comparison, we adopt the simple noise model which does not consider measurement errors or error propagation in quantum circuits.

\subsection{Breaking the Trapping Sets}
\label{Breaking the Trapping Sets}

\begin{figure*}[t]
  \centering
  \subfloat[Traditional LLR-BP\textsubscript{4}]{
    \includegraphics[width=0.95\linewidth]{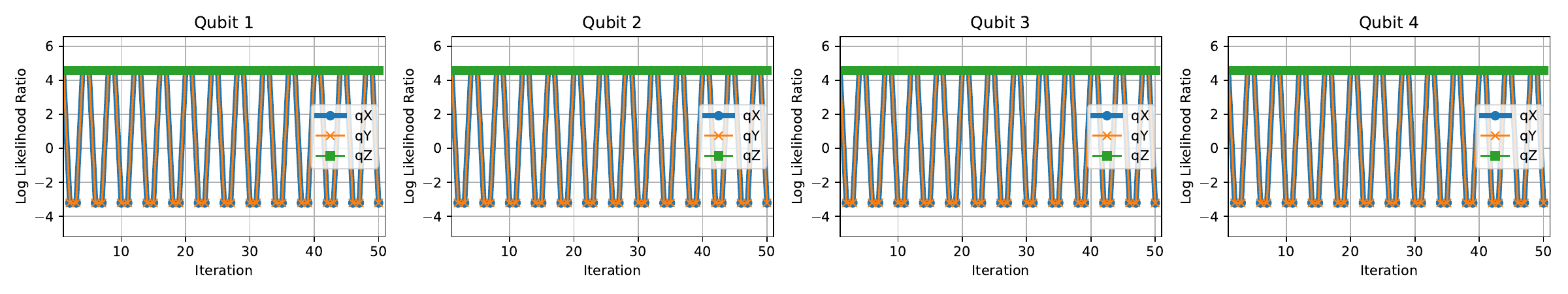}
  }
  
  \subfloat[AdaGrad-BP (parallel)]{
    \includegraphics[width=0.95\linewidth]{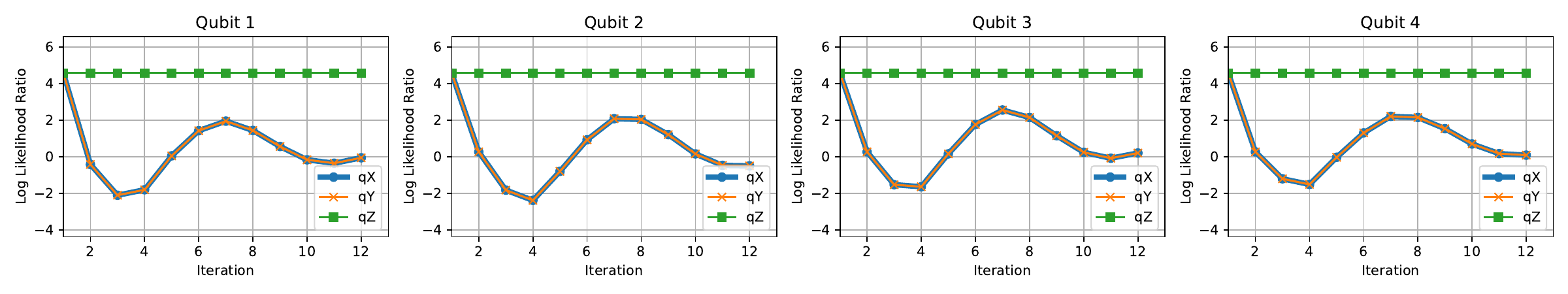}
  }

  \subfloat[AdaGrad-BP (serial)]{
    \includegraphics[width=0.95\linewidth]{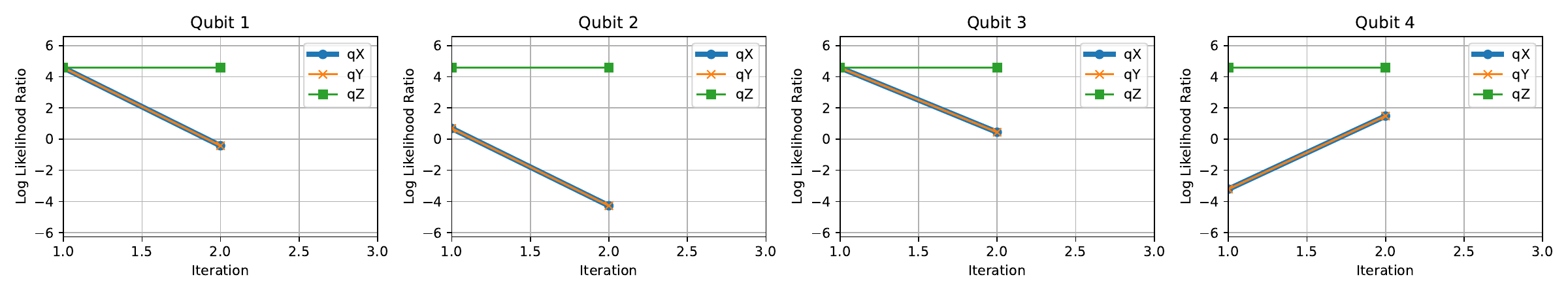}
  }

  \caption{Comparison of traditional LLR-BP\textsubscript{4} and AdaGrad-BP under parallel scheduling and serial scheduling on the (4,0) trapping set. The improved BP algorithm break the symmetry of message updates and converge within 12 and 2 iterations separately.}
  \label{fig:ts_decoding}
\end{figure*}

We first compare the performance of improved BP on the trapping set shown in Fig.~\ref{fig:ts}. In this example, qubit 1 and qubit 2 experience X errors, resulting in an error syndrome of $[1,0,1,0]$. Fig.~\ref{fig:ts_decoding} shows the comparison between the traditional LLR-BP\textsubscript{4} and AdaGrad-BP in Section \ref{Message update optimization} under parallel and serial scheduling, with the maximum iteration set to 50 and the \textit{a priori} probabilities set to 0.01. The hard decision result of the traditional BP oscillates continuously between all $X$ and all $I$, whereas improved BP algorithm show significantly reduced oscillation amplitude and converge within few iterations. 

A more detailed comparison of the different algorithms proposed in this article is presented in Table.~\ref{tb:ts_decoding}. As described in Section \ref{Message update optimization}, we set $\alpha$ of AdaGrad-BP to 5 and $\gamma$ of Momentum-BP to 0. For Momentum-BP, EWAInit-BP, and MBP, we tested 10 evenly distributed values for alpha within the range [0, 1] and selected the one that yielded the best performance. The same parameter selection strategy is also applied in Section \ref{Improved BP on Surface codes}. MBP (equals to AMBP in this simulation) \cite{kuo_2021_quantumBP9_MBP_gf4} and EWAInit-BP converges at the same fastest speed under serial scheduling but oscillates under parallel scheduling. BP-OTS \cite{Chytas_2024_quantumts3_BPOTS} shows little dependency on scheduling method but its convergence depends on the manually set oscillation period $ T $, which may vary in different error correction scenarios. Algorithms proposed in Section \ref{Message update optimization}, Momentum-BP and AdaGrad-BP, converge fast under both parallel scheduling and serial scheduling.

It is worth noting that EWAInit-BP fails to decode trapping sets under parallel scheduling. This is because the recursive message update formula for each bit remains symmetric, making it unable to resolve the issue within trapping sets. In practical error-correcting code decoding, this limitation may lead to non-convergence in very rare special cases. However, in most cases, messages originating from outside the trapping sets can assist in achieving convergence.

\begin{table}[t]
  \centering
  \caption{Iterations required by different algorithms on the (4,0) trapping set}
  \label{tb:ts_decoding}
  \small
  \begin{tabular}{lcc}
    \toprule
    Algorithm & Parallel Scheduling & Serial Scheduling \\
    \midrule
    LLR-BP\textsubscript{4} \cite{Lai_2021_quantumBP8_gf2} & \text{---} & \text{---} \\
    (A)MBP \cite{kuo_2021_quantumBP9_MBP_gf4} & \text{---} & $\alpha=0.5, iter=2$  \\
    BP-OTS \cite{Chytas_2024_quantumts3_BPOTS} & $iter=T+2$  & $iter=T+2$  \\ 
    EWAInit-BP & \text{---} & $\alpha=0.5, iter=2$  \\
    \textbf{AdaGrad-BP} & $iter=12$ & $iter=2$ \\
    \textbf{Momentum-BP} & $\alpha=0.5, iter=2$  & $\alpha=0.6, iter=2$  \\
    \bottomrule
  \end{tabular}
\end{table}

\subsection{Improved BP on Surface codes}
\label{Improved BP on Surface codes}

In this subsection, we compare the performance of proposed BP improvements with other improvements for various surface codes. For all BP algorithms, the maximum number of iterations $iter_{max}$ across all code lengths is set to 100 to ensure linear asymptotic complexity.

\subsubsection{Comparison of our BP optimization methods}

\begin{figure}[t]
  \centering
  \includegraphics[width=1\columnwidth]{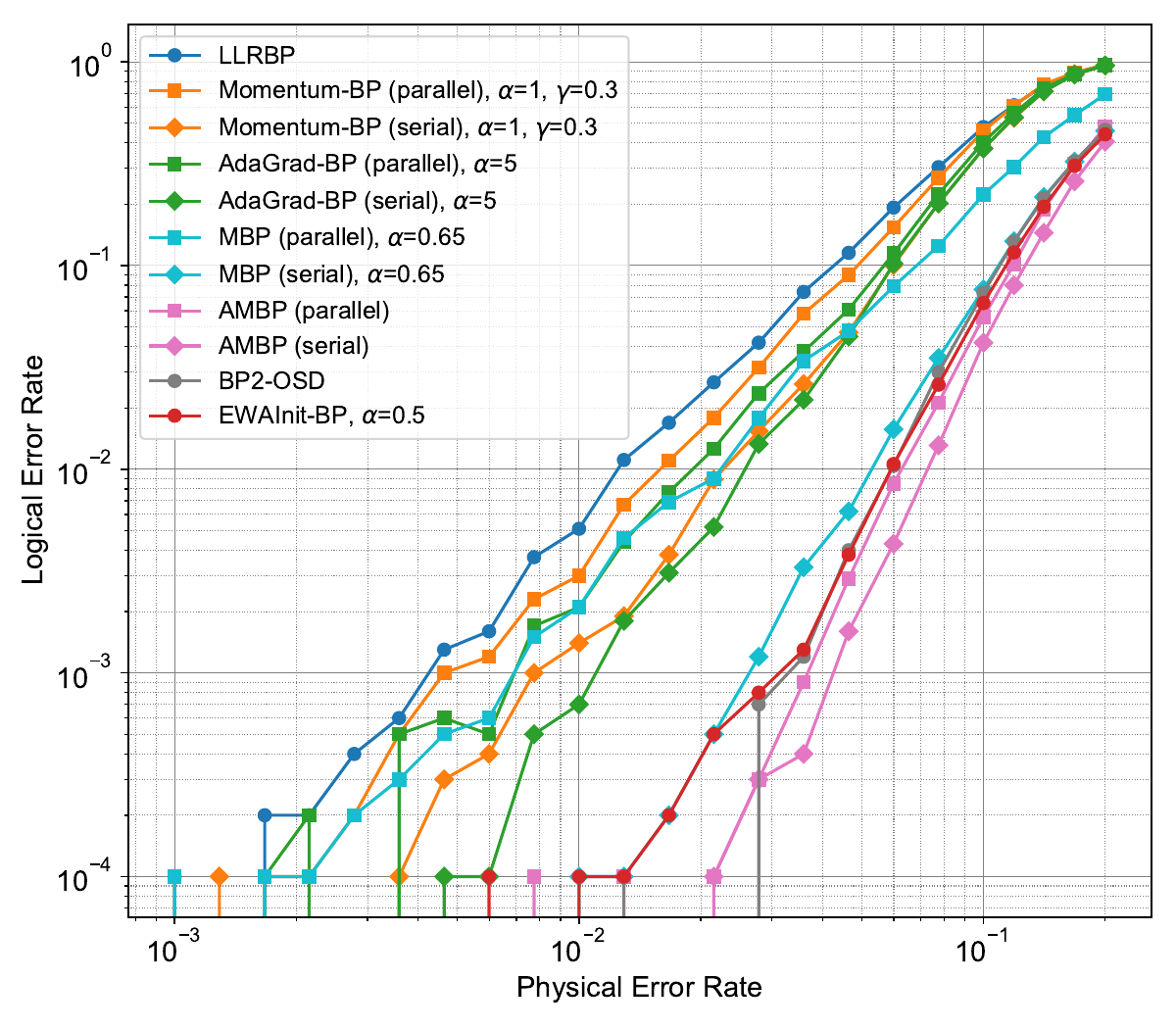}
  \caption{Overall Performance of proposed BP algorithms with existing algorithms on a $[[85, 1, 7]]$ planar surface code. 
  Momentum-BP, AdaGrad-BP, MBP and AMBP are evaluated under both parallel and serial scheduling. EWAInit-BP performs between serial MBP and serial AMBP and is almost identical to BP\textsubscript{2}-OSD-0 in this case.}
  \label{fig:paper_compare}
\end{figure}

A comparison of all methods proposed in this article with some existing high-accuarcy BP improvements on a $L=7$ planar surface code is illustrated in Fig.~\ref{fig:paper_compare}, using the same parameter selection strategy as described in Section \ref{Breaking the Trapping Sets}. The two methods for optimizing message update rules, Momentum-BP and AdaGrad-BP, exhibit improvements over traditional LLR-BP\textsubscript{4} at physical error rates lower than 10\%, providing evidence for the validity of the gradient perspective of BP. The prior-probability-based method, EWAInit-BP, demonstrates improvements ranging from 1 to 3 orders of magnitude over traditional BP and effectively eliminates the error floor. Compared to other algorithms, EWAInit-BP achieves performance between serial MBP and serial AMBP while using hardware-friendly parallel scheduling, and in this case, performs nearly identically to BP\textsubscript{2}-OSD-0 \cite{roffe_2020_quantumBP7_gf2}, despite the latter having a time complexity of $ O(n^3) $. 

The enhancements observed with Momentum-BP and AdaGrad-BP support the validity of the gradient perspective, providing a foundation for further research in this direction. Next, we will provide detailed comparisons between our best enhancement, EWAInit-BP and algorithms with similar time overheads (excluding post-processing and external loops). Although AMBP\cite{kuo_2021_quantumBP9_MBP_gf4} has an asymptotic complexity of $O(N)$, its actual runtime overhead is high, as detailed in Appendix \ref{AMBP}, which also includes the results for Adaptive EWA-BP.

\subsubsection{Toric code}

We construct the $[[2n^2, 1, n]]$ toric code based on its properties as a hypergraph product of an $[n, 1]$ repetition code. The toric code features periodic boundaries, hence all its stabilizers have a weight of four, and any error syndromes on it exhibit translational symmetry \cite{Andreasson_2019_DRL}, which is particularly conducive to BP and neural network decoding. We compare the performance of our EWAInit-BP and other existing methods under both parallel and serial scheduling as shown in Fig.~\ref{fig:toric_flood} and Fig.~\ref{fig:toric_layer}, respectively. Note that in all simulations in this subsection, the parameter choices for different code lengths are consistent.

\begin{figure}[t]
  \centering
  \subfloat[Logical error rates]{
    \includegraphics[width=\columnwidth]{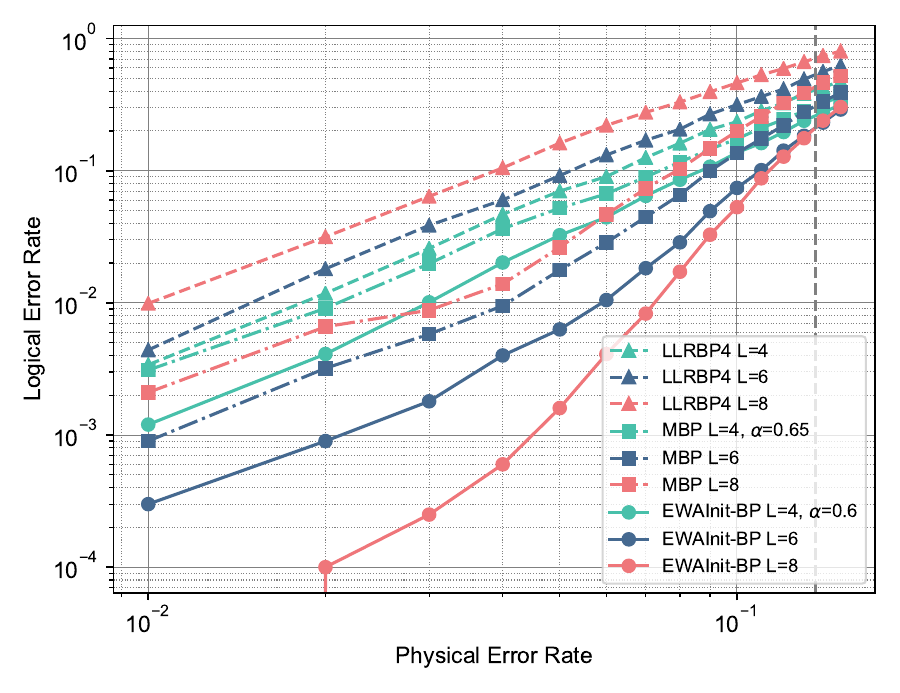}
    \label{fig:toric_flood_LER}
  }

  \subfloat[Average iterations]{
    \includegraphics[width=\columnwidth]{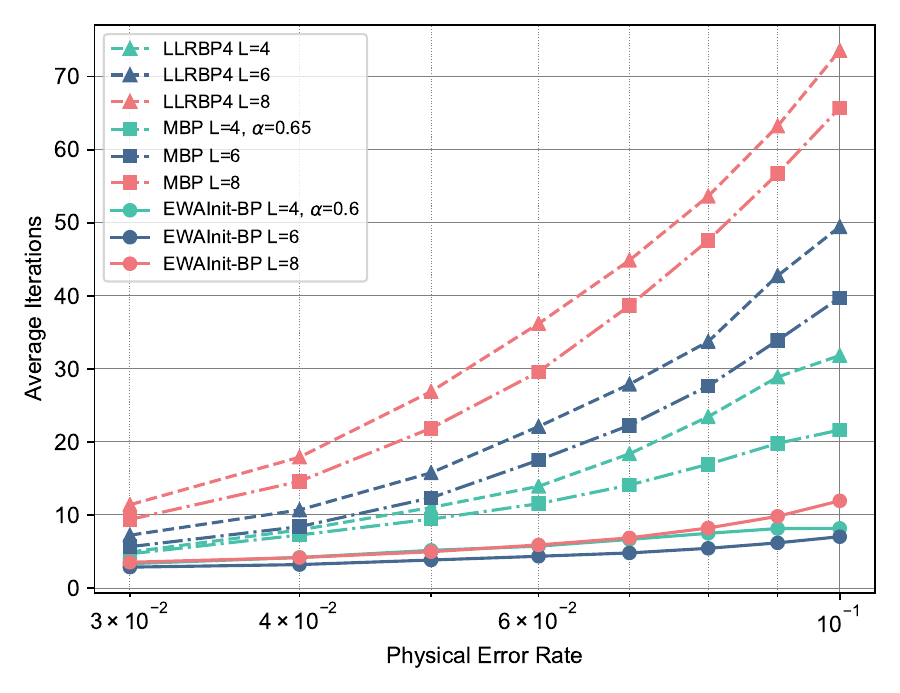}
    \label{fig:toric_flood_iter}
  }

  \caption{Performance of LLRBP4, MBP, and EWAInit-BP under parallel scheduling for toric codes with L = \{4, 6, 8\}. Only EWAInit-BP find a reduction in the logical error rate as the lattice size increases, with the intersection of error rates occurring at 13.6\%.}
  \label{fig:toric_flood}
\end{figure}

\begin{figure}[t]
  \centering
  \subfloat[Logical error rates]{
    \includegraphics[width=\columnwidth]{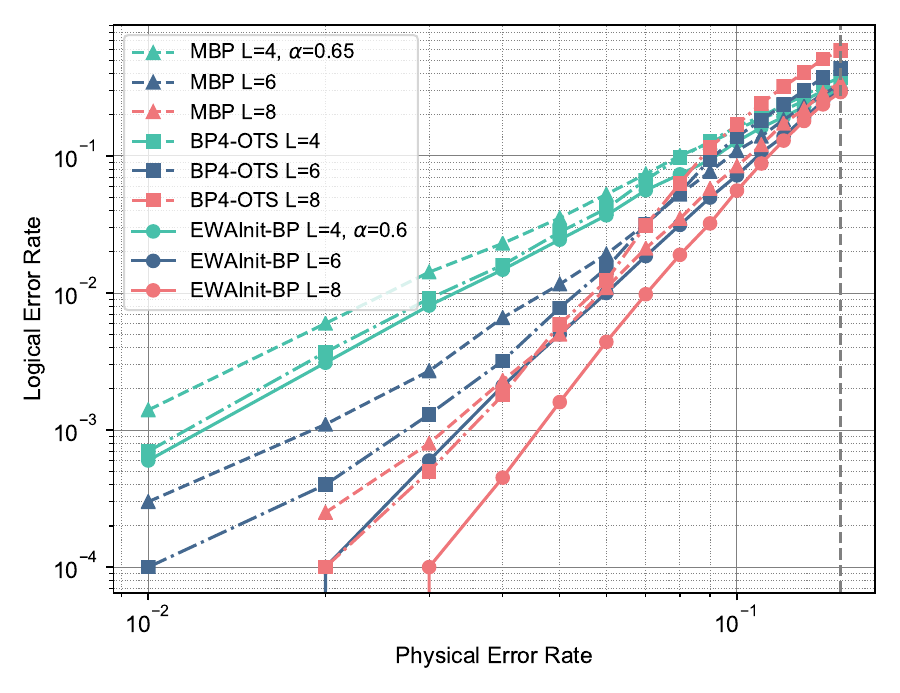}
    \label{fig:toric_layer_LER}
  }

  \subfloat[Average iterations]{
    \includegraphics[width=\columnwidth]{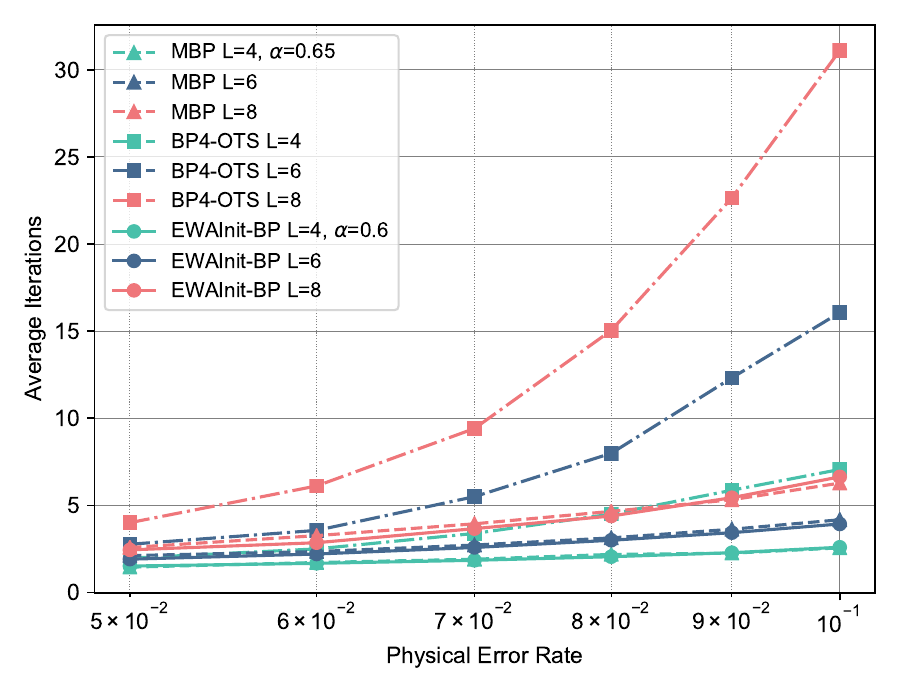}
    \label{fig:toric_layer_iter}
  }

  \caption{Performance of MBP, BP-OTS, and EWAInit-BP under serial scheduling for toric codes with L = \{4, 6, 8\}. We choose $ T=9 $ and $ C=20 $ for BP-OTS. EWAInit-BP outperforms other BP improvements, with the intersection of logical error rates at 15\%.}
  \label{fig:toric_layer}
\end{figure}

For the parallel schedule\footnote{Our implementation of BP-OTS over GF(4) under parallel scheduling did not achieve the performance reported in the original paper, necessitating further investigation into this discrepancy.}, compared to the classic BP over GF(4) with LLR message and its enhancement, MBP, our EWAInit-BP is the only method that achieves a decrease in logical error rates as the lattice size increases, with significant improvements in logical error rates at the same lattice size. The intersection of the error rate curves for these lattice sizes is approximately 13.6\%. Additionally, the average number of iterations required for EWAInit-BP is significantly lower than the other two methods, converging in fewer than 10 iterations mostly. Parallel scheduling of BP algorithms holds the potential to achieve constant complexity through hardware-based parallel computation, while EWAInit-BP ensures both the error correction capabilities and speed in decoding.

For the serial schedule, all three decoding algorithms demonstrated error correction capabilities. Among those, BP-OTS performs better than MBP at low physical error rates, but its performance is surpassed by MBP when the physical error rate exceeds 7\%. The performance of EWAInit-BP, on the other hand, consistently outperforms these two methods across all tested conditions, with the logical error rate at $L=6$ being comparable to that of the other two algorithms at $L=8$, and the intersection of the logical error rate curves is approximately 15\%\footnote{We cannot yet consider the intersection of curves for short codes as the obtained threshold, as the convergence capability of EWAInit-BP degrades to some extent on larger codes. This will be discussed in Appendix \ref{AMBP}.}. In terms of iteration count, both EWAInit-BP and MBP converge in remarkably few iterations (often fewer than 5).

\subsubsection{Planar surface code}

The boundaries of the planar surface code are no longer contiguous, resulting in the emergence of some stabilizers with weight three and qubits that only connect two stabilizers at the boundaries. This configuration leads to a slight reduction in the error correction capabilities of the planar surface code compared to the Toric code. We compare the decoding logical error rates of LLR-BP\textsubscript{4}, MBP, and EWAInit-BP under both parallel and serial scheduling, as illustrated in Fig.~\ref{fig:surface_code}.

\begin{figure}[t]
  \centering
  \subfloat[LER under a parallel schedule]{
    \includegraphics[width=\columnwidth]{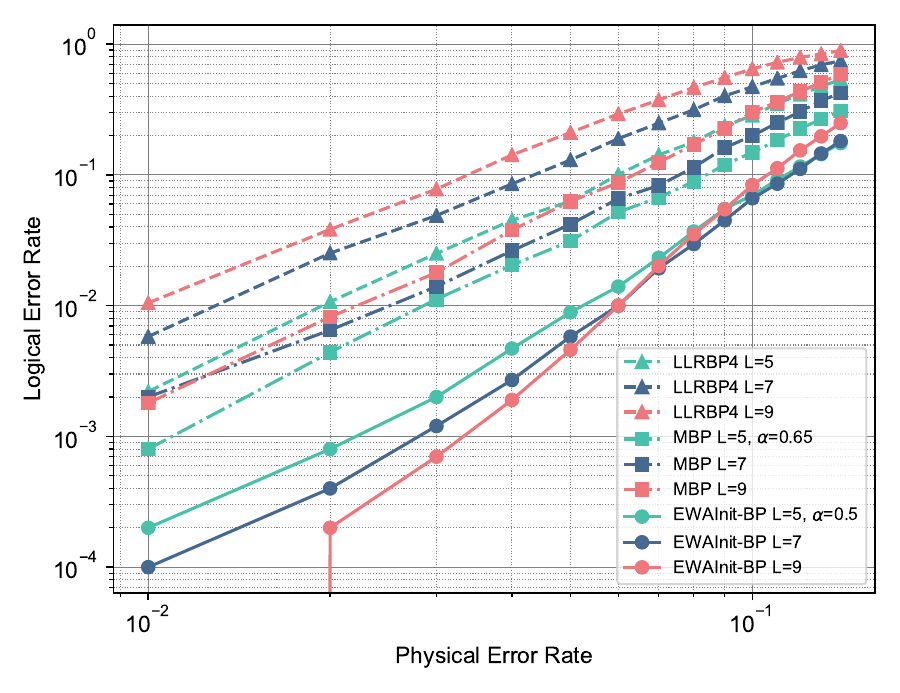}
    \label{fig:surface_flood}
  }

  \subfloat[LER under a serial schedule]{
    \includegraphics[width=\columnwidth]{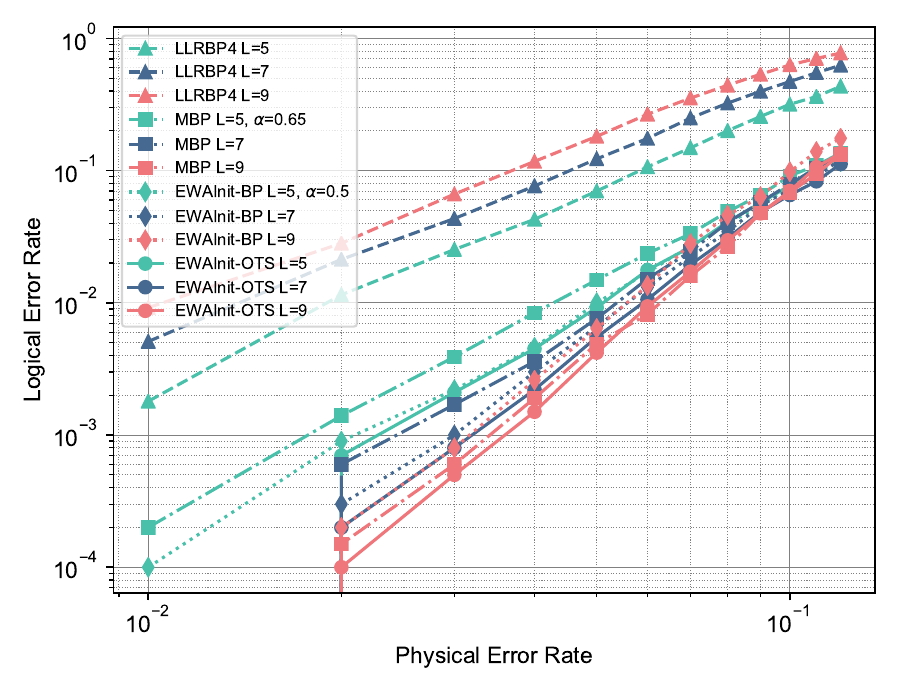}
    \label{fig:surface_layer}
  }

  \caption{Performance of LLRBP4, MBP, and EWAInit-BP (EWAInit-OTS) for planar surface codes with L = 5, 7, 9. The intersection of logical error rates for EWAInit-BP under serial scheduling is approximately 10\%.}
  \label{fig:surface_code}
\end{figure}

Under parallel scheduling, the decoding logical error rates of LLRBP4 and MBP4 increase with increasing code length, whereas EWAInit-BP achieves a reduction in logical error rates at low physical error rates. Unfortunately, at higher error rates, decoding at higher code lengths becomes more challenging, preventing EWAInit-BP from achieving a reliable threshold on the planar surface code.

Under serial scheduling, EWAInit-BP performs better than MBP at $L=5$ and $L=7$; however, its performance at $L=9$ is not satisfying. Despite the dynamic updating of \textit{a priori} probabilities, the method still partly relies on the symmetric initial input priors. One solution is to incorporate the OTS method as a "post-processing" step for EWAInit, resetting the \textit{a priori} probabilities of some nodes every few iterations, with the hyperparameters $T=5$ and $C=20$. This EWAInit-OTS method, while maintaining a time complexity of $O(n)$, shows improvements in convergence rate and logical error rates on the surface code compared to EWAInit-BP and other methods.

The patterns of average iterations of different algorithms for the planar surface code and the XZZX surface code are similar to those observed in the toric code and exhibit the same trends as the logical error rates. Therefore, we do not repeat the iteration figures here.

\subsubsection{XZZX Surface code}

For biased noise, the XZZX surface code exhibits superior error correction capabilities compared to the traditional planar surface code, achieving the theoretical maximum threshold of 50\% across three types of pure Pauli error channels. 

A comparison of decoding performance on the XZZX surface code is illustrated in Fig.~\ref{fig:xzzx_surface_code}. Our EWAInit combined with OTS method outperforms other algorithms under serial scheduling, with the LER curve intersection at approximately 12.4\%. Unfortunately, for odd lattice sizes, the logical error rate for higher code lengths is higher than for lower code lengths at low physical error rates, while it demonstrates error-correcting capabilities at higher physical error rates.

\begin{figure}[t]
  \centering
  \includegraphics[width=\linewidth]{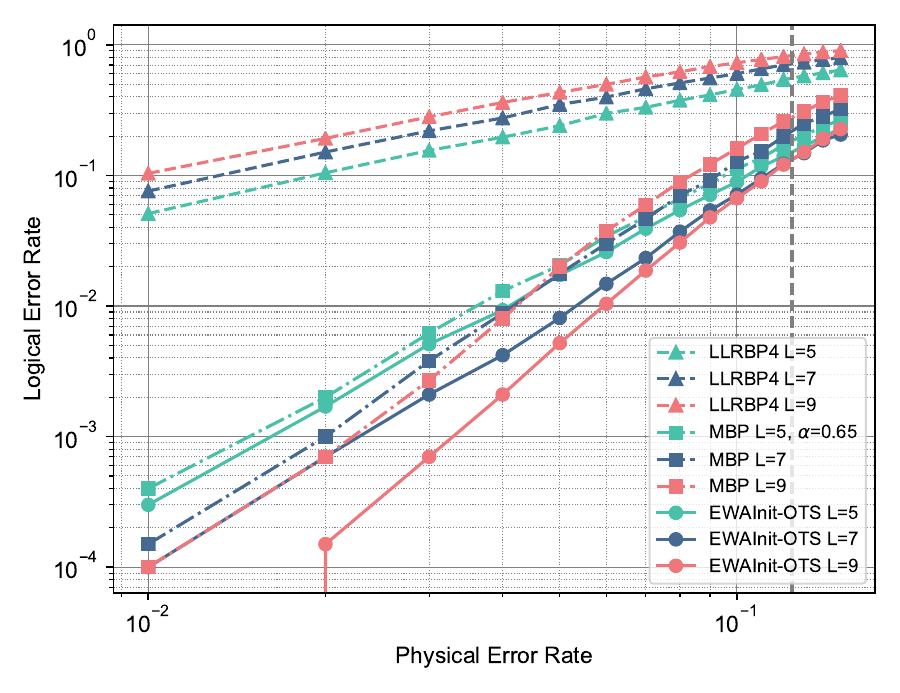}
  \caption{Performance of LLRBP4, MBP, and EWAInit-BP (EWAInit-OTS) for XZZX surface codes with L = 5, 7, 9 under serial scheduling. The intersection of logical error rates for EWAInit-BP is approximately 12.4\%.}
  \label{fig:xzzx_surface_code}
\end{figure}

To demonstrate the bias noise correction capability of the XZZX surface code, we chose $ p_z = \frac{2}{3} $ and $ p_x = p_y = \frac{1}{6} $, implying that Pauli Z errors are more likely to occur, which aligns with the conditions found in actual hardware. Fig.~\ref{fig:xzzx_bias} displays the decoding performance under biased noises under parallel scheduling.
\begin{figure}[t]
  \centering
  \includegraphics[width=\linewidth]{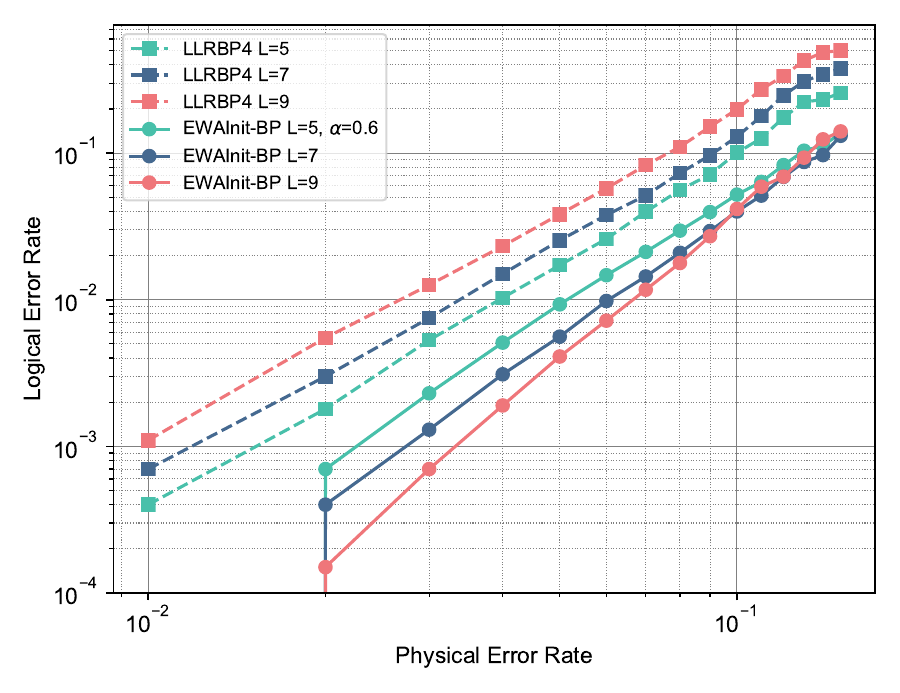}
  \caption{Performance of LLRBP4 and EWAInit-BP for the XZZX surface code with L = 5, 7, 9 under parallel scheduling, evaluated under conditions where Pauli Z errors are more prevalent in the channel. The intersection of LERs for EWAInit-BP is approximately 10\%.}
  \label{fig:xzzx_bias}
\end{figure}
It is evident that when biased noise is present, the performance of BP decoding using the XZZX surface code with parallel scheduling shows significant improvement compared to that on a depolarizing channel. Besides, EWAInit-BP offers an improvement of approximately one order of magnitude and demonstrates the error correction capability under parallel scheduling, with the intersection of logical error rates around 10\%.

\subsection{Generality on other quantum LDPC codes}
\label{sec:generality_qldpc}

Although the trapping set analysis presented in Section~\ref{Breaking the Trapping Sets} is difficult to directly extend to arbitrary quantum LDPC codes due to the structural differences among codes, the performance of our EWAInit-BP algorithm is not limited to 2D topological constructions. In this subsection, we provide empirical evidence that EWAInit-BP generalizes well to other classes of quantum LDPC codes. Two representative examples are considered: the Bivariate Bicycle codes proposed by IBM~\cite{bravyi_2024_IBM}, and 3D Chamon codes which is an instance of 3D XYZ product of three repetition codes~\cite{leverrier2022xyz}.

Fig.~\ref{fig:bicycle_chamon} presents a comparison of logical error rates between BP\textsubscript{2}-OSD-0 and EWAInit-BP under parallel scheduling on these two code families under depolarizing noise and code capacity noise model. For the Bivariate Bicycle Codes, EWAInit-BP almost consistently outperforms BP\textsubscript{2}-OSD-0 across all tested code distances, with its advantage more pronounced at lower physical error rates.

Similarly, for the 3D Chamon codes, EWAInit-BP demonstrates improved decoding performance for all tested distances. While both algorithms show typical threshold-like behavior, EWAInit-BP achieves lower logical error rates under comparable physical error rates. These results provide preliminary evidence that the exponential averaging mechanism is broadly beneficial for decoding quantum LDPC codes.

\begin{figure}[t]
  \centering
  \subfloat[Bivariate Bicycle Codes]{
    \includegraphics[width=0.95\columnwidth]{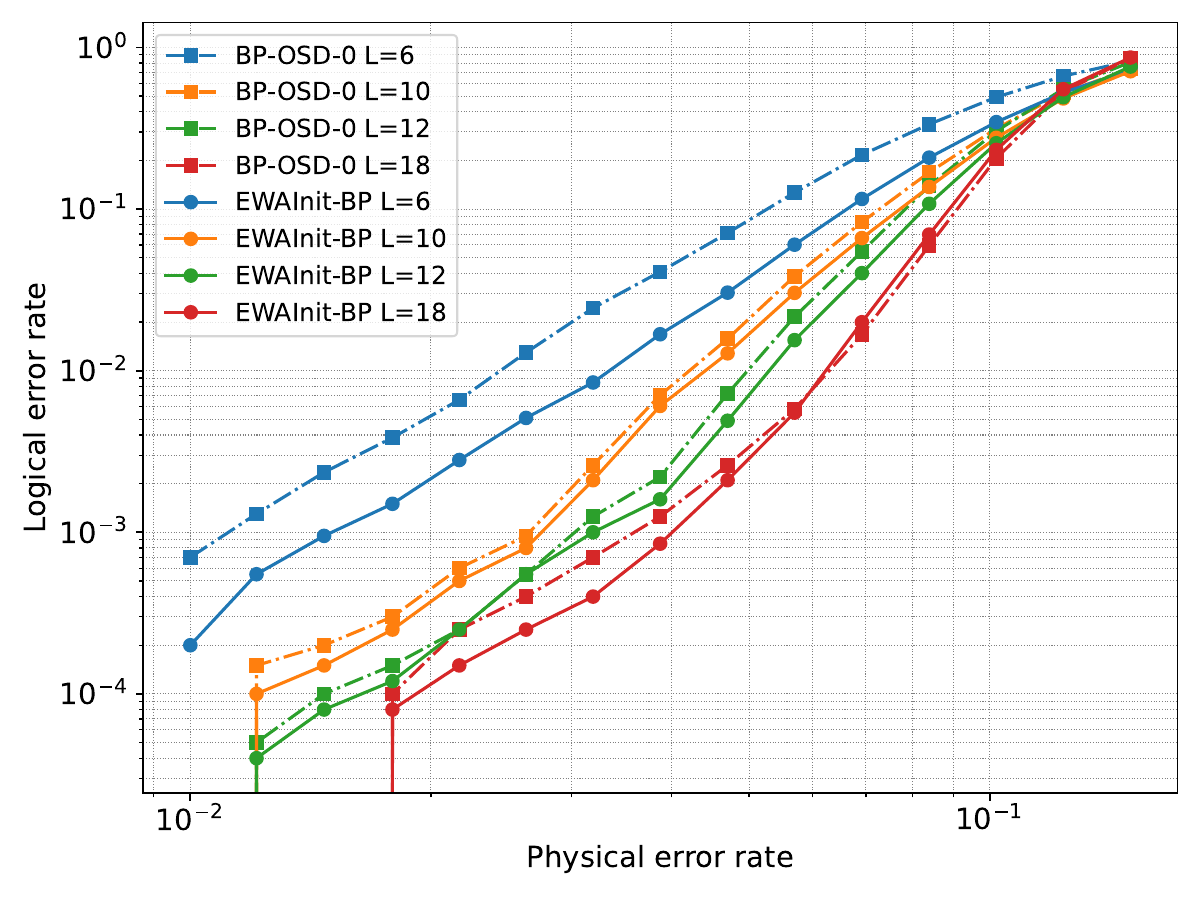}
  }

  \subfloat[3D Chamon Codes]{
    \includegraphics[width=0.95\columnwidth]{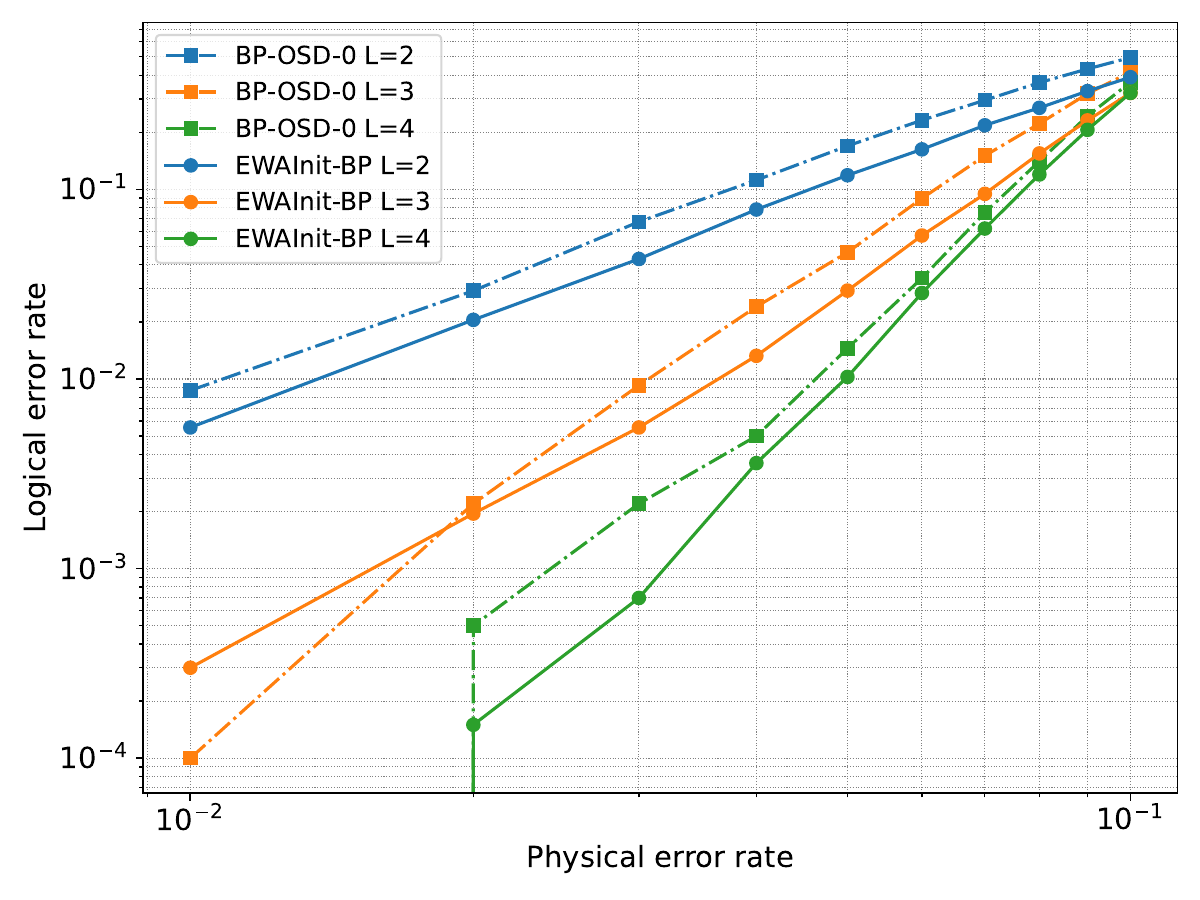}
  }

  \caption{Comparison of BP\textsubscript{2}-OSD-0 and EWAInit-BP on two non-surface-code quantum LDPC families: (a) Bivariate Bicycle Codes and (b) 3D Chamon Codes. EWAInit-BP almost outperforms BP-OSD-0 across all distances and physical error rates.}
  \label{fig:bicycle_chamon}
\end{figure}

\section{Conclusion}
\label{summary}

In this article, we proposed three improved BP algorithms: Momentum-BP, AdaGrad-BP and EWAInit-BP. 

Since the \textit{a posteriori} probability update of BP can be interpreted as the gradient optimization of an energy function, we adopted optimization strategies from gradient descent in machine learning, resulting in Momentum-BP and AdaGrad-BP. These mechanisms aim to smooth the update process and reduce the amplitude of oscillations. Simulations on trapping sets intuitively demonstrated the effectiveness of these algorithms, supporting the feasibility of leveraging gradient optimization principles for BP decoding.

From the theoretical derivation of the message update rules of BP algorithm, we recognized the importance of \textit{a priori} probabilities and transformed the message update smoothing into initial values adapting, proposing EWAInit-BP. This algorithm realizes an exponentially weighted average of past horizontal message updates and outperforms other algorithms without OSD post-processing on various surface codes. Specifically, EWAInit-BP shows high decoding accuracy even under parallel scheduling, with logical error rates decreasing as code length increases on short blocks. Notably, all our proposed methods do not introduce any additional factors to the time complexity of BP. Thus, EWAInit-BP can be implemented with $O(1)$ complexity in fully parallelism while maintaining decoding accuracy.

Furthermore, we also evaluated the performance of EWAInit-BP on two representative quantum LDPC codes—Bivariate Bicycle Codes and the 3D Chamon codes. Our algorithm demonstrates competitive decoding performance on both families under parallel scheduling, showing potential adaptability to high-dimensional or irregular topologies. The results suggest that the principle behind message adaptation strategies may extend beyond the specific context of surface codes, indicating broader theoretical significance. Future work will explore how to adapt our approach to other quantum LDPC codes with higher degeneracy or more complex structures, especially focusing on maintaining decoding performance at larger code distances.

However, applying exponential weighting to messages is still insufficient to ensure BP convergence for all code lengths. It may be necessary to draw inspiration from methods like neural BP to enhance the representational power of the algorithm. Besides, our simulations only considered Pauli errors on qubits and did not account for measurement errors, error propagation in circuits, etc. Future work needs to adapt BP to the circuit-level noise model, potentially requiring specialized optimizations.

\section{Appendix}
\label{appendix}

\subsection{BP as gradient optimization}
\label{app:gradient}

Each iteration of BP can be interpreted as one step of gradient descent with a simplified step size on the following energy function \cite{lucas_1998_BPGradient, kuo_2021_quantumBP9_MBP_gf4}
\begin{align}
    J(\boldsymbol{Q}) = & -\sum_{i=1}^m 2 \tanh^{-1} \nonumber \\
    & \left( (-1)^{s_i} \prod_{v_j \in \mathcal{N}(c_i)} \tanh \left( \frac{\lambda_{H_{ij}}(Q_{v_j})}{2} \right) \right),
\end{align}
where $\boldsymbol{Q}$ is the \textit{a posteriori} probabilities of all the variable nodes in this ieration, $m$ is the number of check nodes. The term inside the parentheses measures the discrepancy between the result of the current iteration and the actual error syndrome of stabilizer $ S_i $. The smaller the energy function, the closer the BP iteration result is to the correct solution. 

The variables for gradient updates are the \textit{a posteriori} estimates of each type of Pauli error of each variable node. Specifically, for the $ W \in \{X, Y, Z\} $ type Pauli error on the $ n $-th variable node, the \textit{a posteriori} probability update is
\begin{align}
    Q_{v_j, W}^{(t)} = Q_{v_j, W}^{(t-1)} - \alpha \frac{\partial J}{\partial Q_{v_j, W}^{(t-1)}},
\end{align}
where the partial derivative is
\begin{align}
    \frac{\partial J}{\partial Q_{v_j,W}} = & -\sum_{\substack{c_i \in \mathcal{M}(v_j) \\ \left\langle W, H_{ij} \right\rangle = 1}} \widetilde{m}_{c_i \rightarrow v_j} \nonumber \\
    & * \frac{g_{ij}(Q) \  e^{-Q_{v_j, W}}}{e^{-Q_{v_j, X}} + e^{-Q_{v_j, Y}} + e^{-Q_{v_j, Z}} - e^{-Q_{v_j, H_{ij}}}} \label{partial},
\end{align}
where $g_{ij}$ is a term that is always greater than 0 (see \cite{kuo_2021_quantumBP9_MBP_gf4}), and $ \widetilde{m}_{c_i \rightarrow v_j} $ is the \textit{a posteriori} probability summation term, approximated using the $ \tanh $ function
\begin{align}
    \widetilde{m}_{c_i \rightarrow v_j} = (-1)^{s_i} \prod_{v_{j'} \in \mathcal{N}(c_i) \backslash v_j} \tanh \left( \frac{\lambda_{H_{ij'}}(Q_{v_{j'}})}{2} \right).
\end{align}

If we view the last term on the RHS of \eqref{partial} as a variable coefficient, then the posterior update in BP can be seen as a gradient descent step that omits this coefficient term.

\subsection{Iterative expansion of EWA-based BP algorithms}
\label{app:expansion}

Here we present the expansion of Momentum-BP and EWAInit-BP proposed in this article to clarify their properties. To streamline calculations and maintain consistency, we set $\gamma$ in Momentum-BP to 0.

\subsubsection{Momentum-BP}
For Momentum-BP, the \textit{a posterori} probabilities are updated in the first iteration as
\begin{align}
    Q_{v_j, W}^{(1)} &= \alpha (\Pi_{v_j, W}^{(0)} + \sum_{\substack{c_i \in \mathcal{M}(v_j) \\ \left\langle W, H_{ij}\right\rangle=1}} m_{c_i \rightarrow v_j}^{(1)}) + (1-\alpha) \Pi_{v_j, W}^{(0)} \\
    &= \Pi_{v_j, W}^{(0)} + \alpha \sum_{\substack{c_i \in \mathcal{M}(v_j) \\ \left\langle W, H_{ij}\right\rangle=1}} m_{c_i \rightarrow v_j}^{(1)},
\end{align}
where the first term on the RHS represents the calculation in the current iteration. The second term represents the calculation from the pervious iteration, which in the first iteration is simply the \textit{a priori} probability, say $Q_{v_j, W}^{(0)} = \Pi_{v_j, W}^{(0)}$. 

Now we can expand the \textit{a posterori} update iteratively,
\begin{align}
    Q_{v_j, W}^{(t)} &= \alpha (\Pi_{v_j, W}^{(0)} + \sum_{\substack{c_i \in \mathcal{M}(v_j) \\ \left\langle W, H_{ij}\right\rangle=1}} m_{c_i \rightarrow v_j}^{(t)}) + (1-\alpha) Q_{v_j, W}^{(t-1)} \\
    &= \alpha (\Pi_{v_j, W}^{(0)} + \sum_{\substack{c_i \in \mathcal{M}(v_j) \\ \left\langle W, H_{ij}\right\rangle=1}} m_{c_i \rightarrow v_j}^{(t)}) + (1-\alpha) * \nonumber \\
    & \quad ( \alpha (\Pi_{v_j, W}^{(0)} + \sum_{\substack{c_i \in \mathcal{M}(v_j) \\ \left\langle W, H_{ij}\right\rangle=1}} m_{c_i \rightarrow v_j}^{(t-1)}) + (1-\alpha) Q_{v_j, W}^{(t-2)}) \\
    &= \alpha \sum_{k=0}^{t-1} (1 - \alpha)^k \left( \Pi_{v_j, W}^{(0)} + \sum_{\substack{c_i \in \mathcal{M}(v_j) \\ \langle W, H_{ij} \rangle = 1}} m_{c_i \rightarrow v_j}^{(t-k)} \right) \nonumber \\ 
    & \quad + (1 - \alpha)^t Q_{v_j, W}^{(0)}. \label{Momentum-general}
\end{align}

Using the initial condition $Q_{v_j, W}^{(0)} = \Pi_{v_j, W}^{(0)}$, we can substitute $Q_{v_j, W}^{(0)}$ in \eqref{Momentum-general} to obtain
\begin{align}
    Q_{v_j, W}^{(t)} &= \Pi_{v_j, W}^{(0)} \left( \alpha \sum_{k=0}^{t-1} (1 - \alpha)^k + (1 - \alpha)^t \right) \nonumber \\  
    & \quad + \alpha \sum_{k=0}^{t-1} (1 - \alpha)^k \sum_{\substack{c_i \in \mathcal{M}(v_j) \\ \langle W, H_{ij} \rangle = 1}} m_{c_i \rightarrow v_j}^{(t-k)}.
\end{align}
Now, let’s examine the coefficient of $\Pi_{v_j, W}^{(0)}$: 
\begin{align}
    \alpha \sum_{k=0}^{t-1} (1 - \alpha)^k + (1 - \alpha)^t. \label{Momentum-coefficient}
\end{align}
This is a finite geometric series, and we can simplify it using the formula
\begin{align}
    \sum_{k=0}^{n} r^k = \frac{1 - r^{n+1}}{1 - r}.
\end{align}
Substituting back to \eqref{Momentum-coefficient}, we get
\begin{align}
    \alpha \sum_{k=0}^{t-1} (1 - \alpha)^k &+ (1 - \alpha)^t = \\ 
    & \alpha \cdot \frac{1 - (1 - \alpha)^t}{\alpha} + (1 - \alpha)^t = 1.
\end{align}

Thus, we have shown that the coefficient of $\Pi_{v_j, W}^{(0)}$ is indeed 1. The complete general term formula is therefore
\begin{align}
    Q_{v_j, W}^{(t)} = \Pi_{v_j, W}^{(0)} + \sum_{k=0}^{t} \alpha (1 - \alpha)^k \sum_{\substack{c_i \in \mathcal{M}(v_j) \\ \langle W, H_{ij} \rangle = 1}} m_{c_i \rightarrow v_j}^{(t-k)}.
\end{align}

\subsubsection{EWAInit-BP}

Similarly, for EWAInit-BP, we have the initial condition
\begin{align}
    \Pi_{v_j, W}^{(1)} &= \Pi_{v_j, W}^{(0)}, \\
    Q_{v_j, W}^{(1)} &= \Pi_{v_j, W}^{(1)} + \sum_{\substack{c_i \in \mathcal{M}(v_j) \\ \langle W, H_{ij} \rangle = 1}} m_{c_i \rightarrow v_j}^{(1)},
\end{align}
and the general term form
\begin{align}
    \Pi_{v_j, W}^{(t)} &= \alpha \Pi_{v_j, W}^{(0)} + (1-\alpha) Q_{v_j, W}^{(t-1)} \\
    &= \Pi_{v_j, W}^{(0)} + \sum_{k=1}^{t} (1 - \alpha)^k \sum_{\substack{c_i \in \mathcal{M}(v_j) \\ \langle W, H_{ij} \rangle = 1}} m_{c_i \rightarrow v_j}^{(t-k)}, \\
    Q_{v_j, W}^{(t)} &= \Pi_{v_j, W}^{(t)} + \sum_{\substack{c_i \in \mathcal{M}(v_j) \\ \langle W, H_{ij} \rangle = 1}} m_{c_i \rightarrow v_j}^{(t)} \\
    &= \Pi_{v_j, W}^{(0)} + \sum_{k=0}^{t} (1 - \alpha)^k \sum_{\substack{c_i \in \mathcal{M}(v_j) \\ \langle W, H_{ij} \rangle = 1}} m_{c_i \rightarrow v_j}^{(t-k)},
\end{align}
where the coefficient of $\Pi_{v_j, W}^{(0)}$ is also 1, derived by the same approach.

Although the approaches for improvement are different, both Momentum-BP and EWAInit-BP attempt to enhance the update of the \textit{a posteriori} probability $Q_{v_j, W}^{(t)}$ by applying a weighted combination of past messages. The distinction lies in the fact that the recurrence relation derived from performing EWA on the prior probability is more concise and intuitive, with only a $ (1 - \alpha) $ factor applied to past messages. Interestingly, EWAInit-BP, with its simpler recurrence relation, also demonstrates better performance in simulations.

\subsection{Compared to Adaptive MBP}
\label{AMBP}

Adaptive MBP (AMBP)\cite{kuo_2021_quantumBP9_MBP_gf4} defines a parameter range (for surface codes, $\alpha \in [1, 0.5]$) and iterates through the possible values from large to small until MBP converges with a specific parameter $\alpha^*$, at which point it returns the result. Sorting the parameters from large to small prevents overly aggressive message updates that might cause logical errors, while ensuring convergence in most cases. In this appendix, we uniformly sampled 11 values within the range $[1, 0.5]$, making the worst-case time overhead for AMBP approximately 10 times that of MBP. We can utilize a similar adaptive parameter selection strategy for EWAInit-BP using a parameter range of $\alpha \in [1, 0]$, where the parameter represents the strength of the EWA. We refer to the resulting algorithm as AEWA-BP (Adaptive EWAInit-BP).

We compare EWAInit-BP, AEWA-BP with parallel scheduling, and AMBP with both parallel and serial scheduling on planar surface codes of sizes $[[13, 1, 3]]$, $[[85, 1, 7]]$, and $[[221, 1, 11]]$, as shown in Fig.~\ref{fig:AMBP_LER}. EWAInit-BP and parallel AMBP no longer exhibit the property of decreasing logical error rates with increasing code length, indicating that large surface codes remain a challenge for parallel BP. However, AEWA-BP with parallel scheduling demonstrates accuracy comparable to AMBP with serial scheduling. A major issue with AEWA-BP is that its parameter range cannot cover all possible magnitudes of message updates, meaning it theoretically cannot guarantee convergence for all errors.

\begin{figure}[t]
  \centering
  \includegraphics[width=0.9\linewidth]{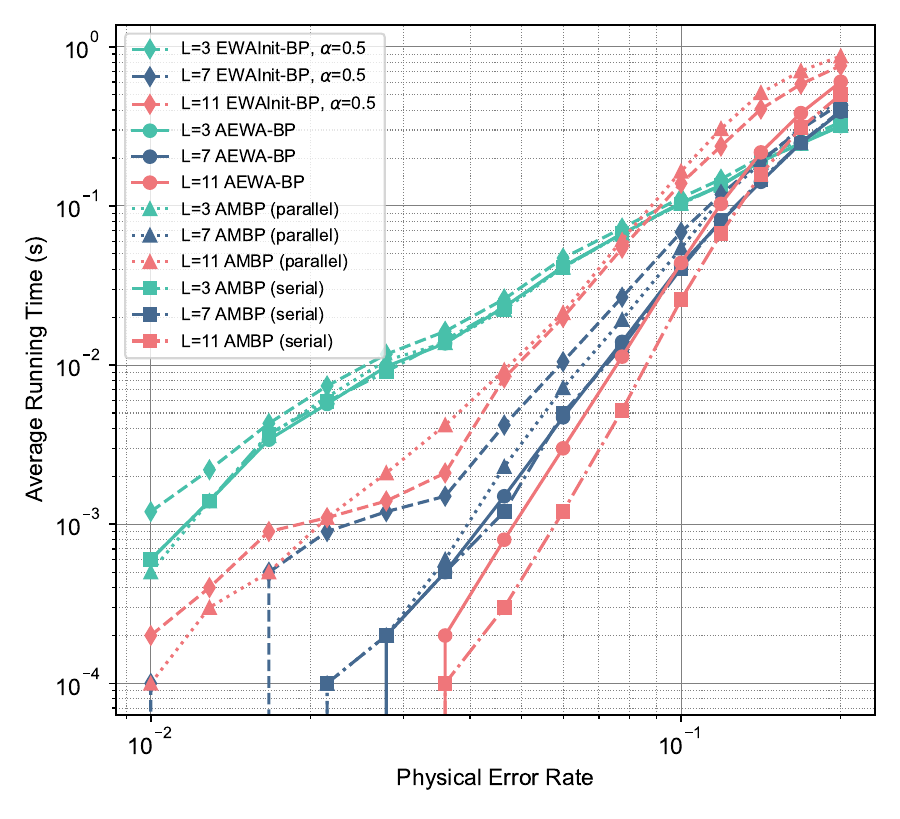}
  \caption{Performance of EWAInit-BP, AEWA-BP, parallel AMBP, and serial AMBP on the planar surface code with L = 3, 7, 11. EWAInit-BP and parallel AMBP achieve similar accuracy, but both exhibit higher error rates at \( L = 11 \) compared to \( L = 7 \). AEWA-BP achieves accuracy comparable to serial AMBP.}
  \label{fig:AMBP_LER}
\end{figure}

Next, we compared the runtime of the aforementioned algorithms to verify the high overhead of adaptive methods discussed in the main text, as shown in Fig.~\ref{fig:AMBP_time}. The runtime of parallel AMBP is very similar to that of AEWA-BP and is therefore omitted from the figure. As shown, EWAInit-BP exhibits the same scaling properties as traditional BP: its runtime grows linearly with code length and increases with physical error rates due to the rising number of iterations (up to a maximum of 100). 

\begin{figure}[t]
  \centering
  \includegraphics[width=0.9\linewidth]{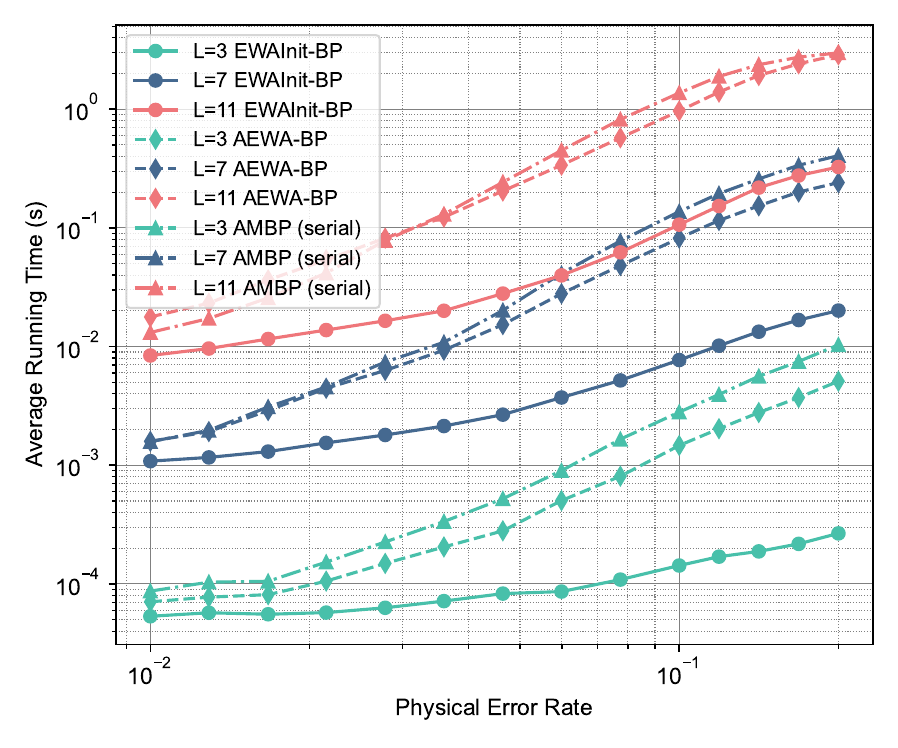}
  \caption{Average running time of EWAInit-BP, AEWA-BP, and serial AMBP on the planar surface code with L = 3, 7, 11. For the latter two algorithms that utilize adaptive methods, the running time increases significantly as the physical error rate grows.}
  \label{fig:AMBP_time}
\end{figure}

However, algorithms utilizing adaptive methods show higher scaling. As the physical error rate increases, these algorithms tend to fully traverse the parameter range, resulting in runtime scaling not only with the number of iterations but also with the parameter range length in high-error-rate scenarios. In this experiment, the parameter range is set to 11 values. It can be observed that the runtime of adaptive methods is an order of magnitude higher than that of EWAInit-BP under high error rates. Table. \ref{tb:AMBP_time} presents the runtime (ms) of the three algorithms under several error rates, offering a more intuitive perspective.

\begin{table}[!ht]
  \centering
  \caption{Comparison of the running time of EWAInit-BP, AEWA-BP, and AMBP under certain physical error rates}
  \label{tb:AMBP_time}
  \small
  \begin{tabular}{cccc}
    \toprule
    PER & EWAInit-BP (ms) & AEWA-BP (ms) & AMBP (ms) \\
    \midrule
    $10^{-2}$ & 1.1 & 1.6 & 1.6 \\
    $6\times 10^{-2}$ & 3.7 & 27.9 & 41.3 \\
    $10^{-1}$ & 7.7 & 81.9 & 137.0 \\
    $2\times 10^{-1}$ & 20.1 & 240.6 & 405.6 \\
    \bottomrule
  \end{tabular}
\end{table}

\bibliography{references}

\begin{thebibliography}{10}

\bibitem{wang_2017_coherence}
Y.~e.~a. Wang, ``Single-qubit quantum memory exceeding ten-minute coherence time,'' {\em Nature Photon.}, vol.~11, no.~10, pp.~646--650, 2017.

\bibitem{zhao_2022_superconducting}
Y.~e.~a. Zhao, ``Realization of an error-correcting surface code with superconducting qubits,'' {\em Phys. Rev. Lett.}, vol.~129, p.~030501, Jul 2022.

\bibitem{google_2023_suppressing}
G.~Q. AI, ``Suppressing quantum errors by scaling a surface code logical qubit,'' {\em Nature}, vol.~614, no.~7949, pp.~676--681, 2023.

\bibitem{shor_1996_FTQC}
P.~Shor, ``Fault-tolerant quantum computation,'' in {\em Proceedings of 37th Conference on Foundations of Computer Science}, pp.~56--65, 1996.

\bibitem{terhal_2015_rmp}
B.~M. Terhal, ``Quantum error correction for quantum memories,'' {\em Rev. Mod. Phys.}, vol.~87, pp.~307--346, Apr 2015.

\bibitem{edmonds_1965_MWPM}
J.~Edmonds, ``Paths, trees, and flowers,'' {\em Canad. J. Math.}, vol.~17, pp.~449--467, 1965.

\bibitem{Nicolas_2022_UF}
N.~Delfosse, V.~Londe, and M.~E. Beverland, ``Toward a union-find decoder for quantum ldpc codes,'' {\em IEEE Trans. Inf. Theory}, vol.~68, no.~5, pp.~3187--3199, 2022.

\bibitem{bravyi_2014_TN}
S.~Bravyi, M.~Suchara, and A.~Vargo, ``Efficient algorithms for maximum likelihood decoding in the surface code,'' {\em Phys. Rev. A}, vol.~90, p.~032326, Sep 2014.

\bibitem{duclos_2010_RG}
G.~Duclos-Cianci and D.~Poulin, ``Fast decoders for topological quantum codes,'' {\em Phys. Rev. Lett.}, vol.~104, no.~5, p.~050504, 2010.

\bibitem{Torlai_2017_neural1}
G.~Torlai and R.~G. Melko, ``Neural decoder for topological codes,'' {\em Phys. Rev. Lett.}, vol.~119, p.~030501, Jul 2017.

\bibitem{breuckmann_2018_neural2}
N.~P. Breuckmann and X.~Ni, ``Scalable neural network decoders for higher dimensional quantum codes,'' {\em Quantum}, vol.~2, p.~68, 2018.

\bibitem{Liu_2019_Nerual3}
Y.-H. Liu and D.~Poulin, ``Neural belief-propagation decoders for quantum error-correcting codes,'' {\em Phys. Rev. Lett.}, vol.~122, p.~200501, May 2019.

\bibitem{ni_2020_neural4}
X.~Ni, ``Neural network decoders for large-distance 2d toric codes,'' {\em Quantum}, vol.~4, p.~310, 2020.

\bibitem{Meinerz_2022_neural5}
K.~Meinerz, C.-Y. Park, and S.~Trebst, ``Scalable neural decoder for topological surface codes,'' {\em Phys. Rev. Lett.}, vol.~128, p.~080505, Feb 2022.

\bibitem{Overwater_2022_neural6}
R.~W.~J. Overwater, M.~Babaie, and F.~Sebastiano, ``Neural-network decoders for quantum error correction using surface codes: A space exploration of the hardware cost-performance tradeoffs,'' {\em IEEE Trans. Quantum Eng.}, vol.~3, pp.~1--19, 2022.

\bibitem{Choukroun_2024_neural7}
Y.~Choukroun and L.~Wolf, ``Deep quantum error correction,'' in {\em Proceedings of the AAAI Conference on Artificial Intelligence}, vol.~38, pp.~64--72, 2024.

\bibitem{Andreasson_2019_DRL}
P.~Andreasson, J.~Johansson, S.~Liljestrand, and M.~Granath, ``Quantum error correction for the toric code using deep reinforcement learning,'' {\em Quantum}, vol.~3, p.~183, 2019.

\bibitem{holmes_2020_decodingspeed}
A.~Holmes, M.~R. Jokar, G.~Pasandi, Y.~Ding, M.~Pedram, and F.~T. Chong, ``Nisq+: Boosting quantum computing power by approximating quantum error correction,'' in {\em 2020 ACM/IEEE 47th Annual International Symposium on Computer Architecture (ISCA)}, pp.~556--569, 2020.

\bibitem{battistel_2023_realtimedecoding}
F.~e.~a. Battistel, ``Real-time decoding for fault-tolerant quantum computing: Progress, challenges and outlook,'' {\em Nano Futures}, vol.~7, no.~3, p.~032003, 2023.

\bibitem{Mackay_2004_quantumBP}
D.~MacKay, G.~Mitchison, and P.~McFadden, ``Sparse-graph codes for quantum error correction,'' {\em IEEE Trans. Inf. Theory}, vol.~50, no.~10, pp.~2315--2330, 2004.

\bibitem{kuo_2021_quantumBP9_MBP_gf4}
K.-Y. Kuo and C.-Y. Lai, ``Exploiting degeneracy in belief propagation decoding of quantum codes,'' {\em Npj Quantum Inf.}, vol.~8, no.~1, p.~111, 2022.

\bibitem{valls_2021_fpga}
J.~Valls, F.~Garcia-Herrero, N.~Raveendran, and B.~Vasić, ``Syndrome-based min-sum vs osd-0 decoders: Fpga implementation and analysis for quantum ldpc codes,'' {\em IEEE Access}, vol.~9, pp.~138734--138743, 2021.

\bibitem{google_2024_surface}
R.~A. et~al., ``Quantum error correction below the surface code threshold,'' 2024.

\bibitem{poulin_2008_quantumBP1_gf4}
D.~Poulin and Y.~Chung, ``On the iterative decoding of sparse quantum codes,'' 2008.

\bibitem{wang_2012_quantumBP2_gf4}
Y.-J. Wang, B.~C. Sanders, B.-M. Bai, and X.-M. Wang, ``Enhanced feedback iterative decoding of sparse quantum codes,'' {\em IEEE Trans. Inf. Theory}, p.~1231–1241, Feb 2012.

\bibitem{babar_2015_quantumBP3_gf4}
Z.~Babar, P.~Botsinis, D.~Alanis, S.~X. Ng, and L.~Hanzo, ``Fifteen years of quantum ldpc coding and improved decoding strategies,'' {\em IEEE Access}, p.~2492–2519, Jan 2015.

\bibitem{raveendran_2019_quantumBP4_GBP2_gf4}
N.~Raveendran, M.~Bahrami, and B.~Vasic, ``Syndrome-generalized belief propagation decoding for quantum memories,'' in {\em ICC 2019 - 2019 IEEE International Conference on Communications (ICC)}, May 2019.

\bibitem{kuo_2020_quantumBP6_gf4}
K.-Y. Kuo and C.-Y. Lai, ``Refined belief propagation decoding of sparse-graph quantum codes,'' {\em IEEE J. Select. Areas Inform. Theory}, p.~487–498, Aug 2020.

\bibitem{Kuo_2022_quantumBP10_gf4}
K.-Y. Kuo and C.-Y. Lai, ``Comparison of 2d topological codes and their decoding performances,'' in {\em 2022 IEEE International Symposium on Information Theory (ISIT)}, pp.~186--191, 2022.

\bibitem{rigby_2019_quantumBP5_gf2}
A.~Rigby, J.~Olivier, and P.~Jarvis, ``Modified belief propagation decoders for quantum low-density parity-check codes,'' {\em Phys. Rev. A}, vol.~100, no.~1, p.~012330, 2019.

\bibitem{Lai_2021_quantumBP8_gf2}
C.-Y. Lai and K.-Y. Kuo, ``Log-domain decoding of quantum ldpc codes over binary finite fields,'' {\em IEEE Trans. Quantum Eng.}, vol.~2, pp.~1--15, 2021.

\bibitem{yi_2023_quantumBP11_gf2}
Z.~Yi, Z.~Liang, K.~Zhong, Y.~Wu, Z.~Fang, and X.~Wang, ``Improved belief propagation decoding algorithm based on decoupling representation of pauli operators for quantum ldpc codes,'' 2023.

\bibitem{old_2022_quantumBP12_GBP_gf2}
J.~Old and M.~Rispler, ``Generalized belief propagation algorithms for decoding of surface codes,'' {\em Quantum}, vol.~7, p.~1037, 2023.

\bibitem{Chytas_2024_quantumts3_BPOTS}
D.~Chytas, M.~Pacenti, N.~Raveendran, M.~F. Flanagan, and B.~Vasić, ``Enhanced message-passing decoding of degenerate quantum codes utilizing trapping set dynamics,'' {\em IEEE Commun. Lett.}, vol.~28, no.~3, pp.~444--448, 2024.

\bibitem{Crest_2022_quantumslowBP5_SI}
J.~D. Crest, M.~Mhalla, and V.~Savin, ``Stabilizer inactivation for message-passing decoding of quantum ldpc codes,'' in {\em 2022 IEEE Information Theory Workshop (ITW)}, pp.~488--493, 2022.

\bibitem{criger_2018_quantumslowBP1_MWPM}
B.~Criger and I.~Ashraf, ``Multi-path summation for decoding 2d topological codes,'' {\em Quantum}, p.~102, Oct 2018.

\bibitem{higgott_2022_quantumslowBP4_beliefmatching}
O.~Higgott, T.~C. Bohdanowicz, A.~Kubica, S.~T. Flammia, and E.~T. Campbell, ``Improved decoding of circuit noise and fragile boundaries of tailored surface codes,'' {\em PRX Quantum}, vol.~13, no.~3, p.~031007, 2023.

\bibitem{caune_2023_quantumslowBP6_MWPM}
L.~Caune, B.~Reid, J.~Camps, and E.~Campbell, ``Belief propagation as a partial decoder,'' 2023.

\bibitem{fuentes_2021_degeneracy}
P.~Fuentes, J.~Etxezarreta~Martinez, P.~M. Crespo, and J.~Garcia-Frías, ``Degeneracy and its impact on the decoding of sparse quantum codes,'' {\em IEEE Access}, vol.~9, pp.~89093--89119, 2021.

\bibitem{raveendran_2021_quantumts1}
N.~Raveendran and B.~Vasi{\'c}, ``Trapping sets of quantum ldpc codes,'' {\em Quantum}, vol.~5, p.~562, 2021.

\bibitem{breuckmann_2021_QLDPC}
N.~P. Breuckmann and J.~N. Eberhardt, ``Quantum low-density parity-check codes,'' {\em PRX Quantum}, vol.~2, no.~4, p.~040101, 2021.

\bibitem{bombin_2013_introductiontopological}
H.~Bombin, ``An introduction to topological quantum codes,'' 2013.

\bibitem{hsieh_2011_hardness}
M.-H. Hsieh and F.~Le~Gall, ``Np-hardness of decoding quantum error-correction codes,'' {\em Phys. Rev. A}, vol.~83, no.~5, p.~052331, 2011.

\bibitem{hocevar_2004_LayerScheduling}
D.~Hocevar, ``A reduced complexity decoder architecture via layered decoding of ldpc codes,'' in {\em IEEE Workshop onSignal Processing Systems, 2004. SIPS 2004.}, Dec 2004.

\bibitem{dehkordi_2010_classicts1}
M.~K. Dehkordi and A.~H. Banihashemi, ``An efficient algorithm for finding dominant trapping sets of ldpc codes,'' in {\em 2010 6th International Symposium on Turbo Codes; Iterative Information Processing}, Sep 2010.

\bibitem{han_2022_classicts2}
S.~Han, J.~Oh, K.~Oh, and J.~Ha, ``Deep-learning for breaking the trapping sets in low-density parity-check codes,'' {\em IEEE Trans. Commun.}, p.~2909–2923, May 2022.

\bibitem{kang_2016_classicts3}
S.~Kang, J.~Moon, J.~Ha, and J.~Shin, ``Breaking the trapping sets in ldpc codes: Check node removal and collaborative decoding,'' {\em IEEE Trans. Commun.}, vol.~64, no.~1, pp.~15--26, 2016.

\bibitem{Pradhan_2023_quantumts2}
A.~K. Pradhan, N.~Raveendran, N.~Rengaswamy, X.~Xiao, and B.~Vasić, ``Learning to decode trapping sets in qldpc codes,'' in {\em 2023 12th International Symposium on Topics in Coding (ISTC)}, pp.~1--5, 2023.

\bibitem{roffe_2020_quantumBP7_gf2}
J.~Roffe, D.~R. White, S.~Burton, and E.~Campbell, ``Decoding across the quantum low-density parity-check code landscape,'' {\em Phys. Rev. Res.}, vol.~2, no.~4, p.~043423, 2020.

\bibitem{bravyi_2024_IBM}
S.~Bravyi, A.~W. Cross, J.~M. Gambetta, D.~Maslov, P.~Rall, and T.~J. Yoder, ``High-threshold and low-overhead fault-tolerant quantum memory,'' {\em Nature}, vol.~627, no.~8005, pp.~778--782, 2024.

\bibitem{leverrier2022xyz}
A.~Leverrier, S.~Apers, and C.~Vuillot, ``Quantum xyz product codes,'' {\em Quantum}, vol.~6, p.~766, 2022.

\bibitem{lucas_1998_BPGradient}
R.~Lucas, M.~Bossert, and M.~Breitbach, ``On iterative soft-decision decoding of linear binary block codes and product codes,'' {\em IEEE J. Sel. Areas Commun.}, p.~276–296, Jan 1998.

\end{thebibliography}
\bibliographystyle{ieeetr}

\EOD

\end{document}